%% file: main_arxiv.tex
\pdfoutput=1

\documentclass[sigconf,nonacm]{aamas}

\usepackage{algorithm}
\usepackage{algpseudocode}
\usepackage{makecell} 

\microtypesetup{expansion=false}

\theoremstyle{plain}
\newtheorem{proposition}{Proposition}

\setcopyright{none}
\makeatletter
\AtBeginDocument{%
  \fancypagestyle{firstpagestyle}{%
    \fancyhf{}%
    \renewcommand{\headrulewidth}{\z@}%
    \renewcommand{\footrulewidth}{\z@}%
    \fancyfoot[C]{\footnotesize\thepage}}%
  \fancypagestyle{standardpagestyle}{%
    \fancyhf{}%
    \renewcommand{\headrulewidth}{\z@}%
    \renewcommand{\footrulewidth}{\z@}%
    \fancyhead[L]{\@headfootfont TANDEM: Balancing Centralized and Decentralized Critics}%
    \fancyhead[R]{\@headfootfont \shortauthors}%
    \fancyfoot[C]{\footnotesize\thepage}}%
  \pagestyle{standardpagestyle}}
\makeatother

\title{TANDEM: Balancing Centralized and Decentralized Critics to Address the
Information Gap in Complex Multi-Agent Environments}

\author{Alireza Saleh Abadi}
\affiliation{%
  \department{School of Computing}
  \institution{University of Nebraska-Lincoln}
  \city{Lincoln}
  \state{Nebraska}
  \country{USA}}
\email{asalehabadi2@huskers.unl.edu}

\author{Leen-Kiat Soh}
\affiliation{%
  \department{School of Computing}
  \institution{University of Nebraska-Lincoln}
  \city{Lincoln}
  \state{Nebraska}
  \country{USA}}
\email{lksoh@unl.edu}

\author{Adam Eck}
\affiliation{%
  \department{Department of Computer Science}
  \institution{Oberlin College}
  \city{Oberlin}
  \state{Ohio}
  \country{USA}}
\email{aeck@oberlin.edu}

\author{Prashant Doshi}
\affiliation{%
  \department{Institute for AI \& School of Computing}
  \institution{University of Georgia}
  \city{Athens}
  \state{Georgia}
  \country{USA}}
\email{pdoshi@uga.edu}

\renewcommand{\shortauthors}{Saleh Abadi, Soh, Eck, and Doshi}

\begin{document}

\begin{abstract}
\input{sections/0-abstract}
\end{abstract}

\keywords{Multi-Agent Reinforcement Learning; Centralized Training Decentralized Execution; Actor-Critic Methods; Open Multi-Agent Systems; Credit Assignment}

\maketitle

\section{Introduction}\label{sec:introduction}
\input{sections/1-introduction}

\section{Related Work}\label{sec:related}
\input{sections/8-related}

\section{Background}\label{sec:background}
\input{sections/2-background}

\section{Problem Formulation}\label{sec:problem}
\input{sections/3-problem}

\section{Methodology}\label{sec:methodology}
\input{sections/4-methodology}

\section{Experiments}\label{sec:experiments}
\input{sections/5-experiments}

\section{Results}\label{sec:results}
\input{sections/6-results}

\section{Ablations}\label{sec:ablations}
\input{sections/7-ablations}

\section{Conclusion}\label{sec:conclusion}
\input{sections/9-conclusion}

\begin{acks}
\input{sections/10-acknowledgement}
\end{acks}

\bibliographystyle{ACM-Reference-Format}
\bibliography{references}

\clearpage
\appendix
\setcounter{table}{0}
\setcounter{figure}{0}
\renewcommand{\thesection}{S\arabic{section}}
\renewcommand{\thetable}{S\arabic{table}}
\renewcommand{\thefigure}{S\arabic{figure}}
\renewcommand{\theHsection}{S\arabic{section}}
\renewcommand{\theHtable}{S\arabic{table}}
\renewcommand{\theHfigure}{S\arabic{figure}}

\section*{Supplementary Material}
\input{sections/11-supplementary}

\end{document}

%% file: sections/0-abstract.tex
Cooperative multi-agent reinforcement learning under centralized training with
decentralized execution (CTDE) shapes decentralized policies with a centralized
critic that observes the global state, then removes the critic at execution. The
policies are optimized to rely on information they will not have when they act, an
information gap between training and execution. It widens under partial
observability, observation noise, and openness, where the set of agents and tasks
present changes during an episode. Existing methods transfer centralized guidance
in one direction only, removing or distilling the critic once, so the policies
never rehearse acting without the critic while they can still be corrected. We
propose TANDEM, which keeps one centralized critic and reads it two ways at each
training step: a centralized read from the global state, and a decentralized read
restricted to the information each agent will have at execution. A parameter-free
gate compares the two reads for each agent and trains the policy on the
decentralized read where they agree, reinstating the centralized read only where
the local view is misleading. A shared target pulls the decentralized read
toward the centralized one. We formalize the setting as a task- and agent-open
decentralized partially observable Markov decision process (TAO-Dec-POMDP) and evaluate TANDEM on three cooperative domains, each isolating one form
of the information gap: discovery for partial observability, sampling
for observation noise, and openness in wildfire suppression. We compare against
six CTDE baselines and ablate the gate and the shared target; the supplementary
material holds the full variant sweep, diagnostics, and learning curves. We
measure return at execution, once the centralized critic is removed. TANDEM is
the best method on eight of the nine setups and never significantly worse than
any baseline. Where the information gap is widest, it significantly beats every
baseline. Removing the gate reduces execution return most under openness, where the gate
helps most, reinstating the centralized read exactly where the local view
is misleading.

%% file: sections/1-introduction.tex
Among the paradigms for training multiple cooperating agents, centralized
training with decentralized execution (CTDE)
\citep{oliehoek2008optimal,kraemer2016rehearsal} is dominant
\citep{albrecht2024marl,papoudakis2021benchmark}. In cooperative multi-agent
reinforcement learning (MARL), CTDE places a centralized critic during
training that observes the global state and the joint actions of all agents,
and uses it to update the decentralized policies. Centralized information stabilizes learning and, with a joint-action critic, resolves credit
assignment~\citep{lowe2017maddpg,foerster2018coma,mappo,amato2024ctde}.
At execution, the centralized critic is removed, and each agent acts only on
its local observation history, so the system does not rely on
global information, which may be unavailable at real-world deployment. The decentralized policies are therefore
optimized under an information regime they never face at execution, a
train-execution asymmetry that creates an \emph{information
gap}: the discrepancy between the global state $s^t$ the centralized critic
conditions on and each agent's own information (its history $\tau_i^t$) available
at execution. This gap is where execution failures concentrate: under
partial observability, observation noise, or openness~\citep{eck2023openness},
the decentralized policies have no centralized critic to fall back on and the behaviors learned become brittle.

The information gap matters because the deployment objective differs from the
training objective: at deployment each agent must act autonomously, without the
centralized critic, whereas CTDE optimizes performance while the critic is
present. Under constant guidance the policies can over-rely on the centralized
critic, settling into coordination they cannot sustain once it is removed. In other words, training performance
overstates the autonomous performance deployment requires~\citep{oliehoek2016decpomdp}.

Existing CTDE methods address the information gap with transfer mechanisms that
are \emph{monotonic}: centralized information is provided and then reduced or
removed exactly once. Standard CTDE algorithms keep the centralized critic
present at every training step and remove it only at
execution~\citep{mappo,lowe2017maddpg,rashid2018qmix}. Teacher--student methods
transfer a centralized teacher's knowledge to decentralized students in a
single stage~\citep{zhao2024ctds,lin2019cesma,chen2024ptde,li2025magpo};
Kickstarting decays the weight of a teacher distillation loss over
training~\citep{schmitt2018kickstarting}; and Message-Dropout masks parts of
the centralized critic input per sample~\citep{kim2019messagedropout}. These
methods share one weakness: because their transfer is monotonic,
the decentralized policies never train on the execution regime while still
being correctable, in that same training, wherever their local view leads them
astray.

We propose \emph{TANDEM}, a CTDE training scheme that balances a centralized
and a decentralized read of one critic. Rather than training two value networks,
TANDEM keeps one centralized critic and reads it two ways at each step: 
a \emph{centralized read} from the global state, and
a \emph{decentralized read} restricted to only the information each agent will have at
execution~\citep{lyu2021contrasting}. A
parameter-free gate compares the two reads for every agent at every step, mixing
the centralized and decentralized advantage estimates they yield. Where
they agree, the decentralized read already carries the right sign of the
advantage (whether the action did better or worse than expected), so the
centralized critic is withdrawn and the policy trains on the execution regime
(the decentralized read, restricted to the information the actor will have at
execution); where they disagree, the agent's local view is misleading, so the centralized
critic is reinstated to correct it. A shared target pulls the
decentralized read toward the centralized one, so the centralized critic is
increasingly withdrawn as training proceeds, continuously and per agent rather
than on a fixed schedule. As training proceeds, the two reads increasingly agree, so the
decentralized policies act almost as well without the centralized critic as with it; we
track this agreement as a diagnostic, not a convergence point, since under openness and
observation noise, execution return keeps fluctuating across training and no single checkpoint is cleanly ``converged''.

This paper makes five contributions toward handling the information gap so that
decentralized policies perform autonomously. \textbf{First}, we identify monotonic
transfer as the weakness common to existing methods, and we frame our approach (withdrawing and reinstating the centralized critic per agent and
per step according to whether a decentralized read agrees with it) as a
distinct design point. \textbf{Second}, we give a formal model of the
setting as a task- and agent-open Dec-POMDP (TAO-Dec-POMDP;
Section~\ref{sec:problem}), defining the information gap precisely, rather than
by analogy, and characterizing its effect in Proposition~\ref{prop:gap}. \textbf{Third}, we instantiate TANDEM using
standard multi-agent proximal policy optimization (MAPPO)
training~\citep{mappo}, specifying the two critic reads,
the parameter-free gate that mixes them, and the shared target, and establish three
properties that characterize the gate and the information gap
(Section~\ref{sec:method-theory}).
\textbf{Fourth}, we evaluate TANDEM on three cooperative domains, each a distinct form
of the information gap: \emph{discovery} and \emph{sampling} in the
Vectorized Multi-Agent Simulator (VMAS)~\citep{bettini2022vmas}, targeting
partial observability and observation noise, and an openness-aware wildfire
suppression environment~\citep{eck2020wildfire}, targeting openness. \textbf{Finally}, we compare TANDEM
against six CTDE baselines and ablate the gate and the shared
target under each form of the information gap. TANDEM's advantage is largest where the
information gap is widest, and ablations show the gate matters most under openness,
where removing it lowers~execution~return.


%% file: sections/8-related.tex
We review prior work by how it trains decentralized policies under the
information gap. Methods either condition on centralized information that
execution removes, transfer or fade that information through a one-way schedule,
or vary the task rather than the information the policy is optimized under; none
withdraws and reinstates the centralized critic.

\textbf{CTDE Methods.}  These train decentralized policies with centralized information that is removed at
execution~\citep{amato2024ctde}. Centralized-critic actor--critic methods, including
MADDPG~\citep{lowe2017maddpg}, COMA~\citep{foerster2018coma}, MAPPO~\citep{mappo}, and
FACMAC~\citep{peng2021facmac}, condition a critic on global information (the global
state, and in the action-conditioned cases the joint actions) during training; value-factorization methods, including VDN~\citep{sunehag2018vdn},
QMIX~\citep{rashid2018qmix}, QTRAN~\citep{son2019qtran}, and QPLEX~\citep{wang2021qplex},
decompose a joint value learned from global information into per-agent utilities. This
conditioning gives each agent a stable target and assigns credit across agents, which
independent learners cannot do, but it is the source of the information gap
(Section~\ref{sec:introduction}). The gap
is intrinsic rather than a tuning issue: even a fully converged centralized
critic is not strictly beneficial relative to one conditioned on local
histories~\citep{lyu2021contrasting,lyu2023centralized}.

\textbf{Transfer and Cue Removal.}
\label{sec:rw-teacher}  Teacher--student transfer and cue-removal methods bridge centralized training to
decentralized execution by monotonic transfer: centralized information enters as a
teacher, a distillation target, or an extra input, and is faded or pruned on a schedule
until the policy acts without it. Teacher--student methods regress decentralized students
onto centralized values~\citep{zhao2024ctds,lin2019cesma,chen2024ptde,li2025magpo},
building on policy distillation and Kickstarting~\citep{rusu2016distillation,schmitt2018kickstarting};
fading and pruning methods mask or reduce centralized critic inputs and communication over
training~\citep{kim2019messagedropout,zhou2023cadp,li2023explicit,liu2024tacit}; and interactive
imitation corrects the learner on the states it visits with teacher
targets~\citep{ross2011dagger}. In each case the policy is shaped by information that
execution lacks, and withdrawing it on a one-way schedule leaves the policy
formed around having had it.

\textbf{Openness, Ad Hoc Teamwork, and Robustness.}
\label{sec:rw-openness}  They define the conditions we evaluate under rather
than train for the information gap. Open agent systems formalize environments whose agent,
task, and frame sets change over time~\citep{eck2023openness}, with task openness studied
in MARL~\citep{anilmohito} and wildfire suppression as a standard
benchmark domain~\citep{eck2020wildfire,patino2025inaugural}. Openness breaks the fixed-team and
stationarity assumptions that standard CTDE credit assignment relies
on~\citep{abadi2025challenges}. Ad hoc teamwork coordinates with unknown teammates that
enter and leave~\citep{stone2010ad,rahman2021gpl,wang2024n}, and robustness formalizes
observation noise as perturbation of the state observation~\citep{zhang2020robust,he2023robust}.
Curriculum, multi-task, and open-ended learning vary the task rather than the information
regime~\citep{bengio2009curriculum,teh2017distral,wang2019poet}. 

Across these families the decentralized policies are optimized while centralized information
is present, so each depends on information that execution removes. TANDEM instead
trains each policy on its own decentralized read wherever the two reads agree, so the policy
is optimized on the execution regime, and reinstates the centralized read only where the
local view is misleading, using the critic to correct the policy's own actions there rather
than supplying a privileged target to imitate.

%% file: sections/2-background.tex
This section reviews what TANDEM builds on: open agent systems, the open
Dec-POMDP that Section~\ref{sec:problem} extends, the deployment conditions TANDEM
must withstand, and the CTDE paradigm and the MAPPO actor-critic.

\subsection{Open Agent Systems}
\label{sec:bg-openness}

Openness refers to multi-agent environments whose composition changes over
time~\citep{eck2023openness}. Prior work identifies three forms: agent openness
(AO), agents entering and leaving and altering team composition; task openness
(TO), work items appearing and disappearing during operation; and frame openness
(FO), agents' capabilities changing~\citep{eck2023openness}. We address
AO and TO and leave FO to future work.

\subsection{The Open Dec-POMDP}
\label{sec:bg-opendecpomdp}

The open Dec-POMDP~\citep{cohen2017open} extends the Dec-POMDP by letting the set
of agents change within an episode. We first recall the Dec-POMDP, then present
the open Dec-POMDP and the four components in which the two differ.

\textbf{Dec-POMDP.} A Dec-POMDP (decentralized partially observable Markov decision
process)~\citep{oliehoek2016decpomdp} models a cooperative multi-agent task under
partial observability. It has a set of agents, a finite set of states
$\mathcal{S}$, and for each agent $i$ a finite action set $\mathcal{A}_i$ and a
finite observation set $\mathcal{O}_i$. It adds a transition function, an
observation function, a reward $R$ shared by all agents, a horizon $h$, and an
initial state distribution $b_0 \in \Delta(\mathcal{S})$, writing $\Delta(X)$ for
the set of probability distributions over $X$. At each step $t$ every agent
selects an action $a_i^t$, the environment transitions, and the agents receive
$R(s^t, \mathbf{a}^t)$ for the state $s^t$ and the joint action $\mathbf{a}^t$.
Agent $i$ never sees $s^t$; it receives a local observation $o_i^t$ and
conditions its action on its local action-observation history $\tau_i^t$, the
sequence of actions it has taken and observations it has received. The objective
is the expected cumulative reward
$\mathbb{E}_{\pi}[\sum_{t=0}^{h-1} \gamma^t R(s^t, \mathbf{a}^t)]$ under a joint
policy $\pi$, with discount $0 \leq \gamma \leq 1$~\citep{oliehoek2016decpomdp}.

\textbf{Open Dec-POMDP.} The open Dec-POMDP is the tuple:
\begin{equation}
\label{eq:open-decpomdp}
\langle \mathcal{N}, \mathcal{I}, \mathcal{S}, \mathcal{A}, \mathcal{O}, \phi, r, b_0, T \rangle ,
\end{equation}
where $\mathcal{N} = \{1, \ldots, n\}$ is a set of $n$ agents,
$\mathcal{I} \subseteq 2^{\mathcal{N}}$ is a set of coalitions (subsets of
$\mathcal{N}$ that may operate together),
$\mathcal{S}$ is a finite set of world states, and
$\mathcal{A} = \{\mathcal{A}_i \mid i \in \mathcal{N}\}$ and
$\mathcal{O} = \{\mathcal{O}_i \mid i \in \mathcal{N}\}$ are the action and
observation sets. A coalition $I$ acts and observes jointly, through
$\mathcal{A}_I = \times_{i \in I} \mathcal{A}_i$ and
$\mathcal{O}_I = \times_{i \in I} \mathcal{O}_i$. The remaining four components
are the dynamics model $\phi$, given below; the reward model
$r : \mathcal{S} \times \mathcal{A}_I \to \mathbb{R}$, which a Dec-POMDP writes
$R$; the initial belief $b_0 \in \Delta(\mathcal{S} \times \mathcal{I})$ over
states and coalitions; and the planning horizon $T$~\citep{cohen2017open}. Four
components differ from the Dec-POMDP. The set $\mathcal{N}$ becomes a population
from which the operating coalition $I^t \in \mathcal{I}$ is drawn at each step, and
only an active agent $i \in I^t$ acts and observes, so the joint action is drawn
from $\mathcal{A}_{I^t}$ rather than over all agents. The transition and
observation functions collapse into a single dynamics model:
\begin{equation}
\label{eq:open-dynamics}
\phi(s', \mathbf{o}', I' \mid \mathbf{a}, s, I) ,
\end{equation}
which assigns a probability to the next state $s'$, joint observation
$\mathbf{o}'$, and next coalition $I'$ (primes denote next-step quantities, those
at $t{+}1$) given the current state $s$, coalition $I$,
and joint action $\mathbf{a} \in \mathcal{A}_I$, so agents enter and leave under
$\phi$. The objective is the undiscounted sum of rewards over the horizon $T$,
and the initial distribution draws a state and coalition together,
$(s^0, I^0) \sim b_0$.

Eq.~\eqref{eq:open-decpomdp} does \emph{not} represent TO: the coalition
varies, but nothing in the tuple makes tasks appear or disappear, so the model
covers AO alone. Section~\ref{sec:problem} extends it to handle both AO and TO.

\subsection{Partial Observability and Observation Noise}
\label{sec:bg-perception}

Partial observability and observation noise are the deployment conditions that
limit what an agent can use at execution. Under partial observability, as in both
models of Section~\ref{sec:bg-opendecpomdp}, each agent sees only its local
view, never the global state $s^t$. Observation noise further perturbs this view at execution, so the
observation an agent acts on differs from what it has learned.

\subsection{CTDE and MAPPO}
\label{sec:bg-ctde}\label{sec:bg-mappo}

CTDE trains with global information that the environment reveals to the learner but not to any single agent, restricting each agent to local information at execution~\citep{oliehoek2008optimal,kraemer2016rehearsal}. During training, a
centralized critic observes the global state $s^t$ and the joint action
$\mathbf{a}^t$ and uses it to update the decentralized
policies~\citep{lowe2017maddpg,foerster2018coma,amato2024ctde}. A centralized
critic stabilizes learning, and a joint-action critic can also resolve credit
assignment among agents. The state-only critic of Section~\ref{sec:bg-mappo}
gives the first without the second. TANDEM requires a
state-only critic: it evaluates the critic from one agent's masked local view
(Section~\ref{sec:methodology}), which a joint-action critic, needing every
agent's action, cannot provide. At execution, the
centralized critic is removed and each policy acts only on its
local action-observation history $\tau_i^t$; this train-execution asymmetry is
the source of the information gap identified in Section~\ref{sec:introduction}.

MAPPO~\citep{mappo} is a CTDE actor-critic algorithm that applies proximal policy
optimization (PPO)~\citep{ppo} to the cooperative multi-agent setting: each agent
$i$ has a decentralized policy $\pi_\theta(a_i^t \mid \tau_i^t)$ conditioned on
its local history, with parameters $\theta$ shared across agents, and a
centralized critic $V_\psi(s^t)$ with parameters $\psi$, conditioned on the
global state, estimates returns. From $V_\psi$, MAPPO forms an advantage estimate
$\hat{A}^t$ using generalized advantage estimation (GAE)~\citep{gae} and updates
each policy by maximizing the clipped surrogate objective:
\begin{equation}\label{eq:mappo}
\begin{aligned}
L(\theta) &= \mathbb{E}_{i,t}\bigl[\min\bigl(\omega_i^t(\theta)\,\hat{A}^t,\;
\mathrm{clip}(\omega_i^t(\theta), 1{-}\epsilon, 1{+}\epsilon)\,\hat{A}^t\bigr)\bigr], \\[4pt]
\omega_i^t(\theta) &= \frac{\pi_\theta(a_i^t \mid \tau_i^t)}{\pi_{\theta_{\mathrm{old}}}(a_i^t \mid \tau_i^t)},
\end{aligned}
\end{equation}
where $\omega_i^t(\theta)$ is agent $i$'s probability ratio at step $t$ between
the updated and the previous policy, the expectation runs over agents and steps,
and $\epsilon$ is the clipping parameter. MAPPO's critic is state-only, so the
advantage $\hat{A}^t$ carries global information through $V_\psi(s^t)$ rather than
through the joint action. TANDEM uses the update of
Eq.~\eqref{eq:mappo} with an advantage mixed from a centralized
and a decentralized read of the same critic (Section~\ref{sec:methodology}).

%% file: sections/3-problem.tex
We first introduce the \emph{task- and agent-open Dec-POMDP} (TAO-Dec-POMDP), an extension
of the open Dec-POMDP~\citep{cohen2017open} that adds TO to the AO it
already models. Because the set of tasks present changes over
an episode and the actions that address a task exist only while that task is present, the actions
available to an agent depend on the current tasks rather than a fixed set. We obtain the model by
carrying the tasks in the state and letting each agent's available actions depend on it. The TAO-Dec-POMDP is the tuple:
\begin{equation}
\label{eq:model}
\mathcal{M} \triangleq \langle \mathcal{N}, \mathcal{I}, \mathcal{K}, \mathcal{S}, \mathcal{A},
\mathcal{A}^{+}, \mathcal{O}^{+}, \phi, r, b_0, T \rangle ,
\end{equation}
which keeps $\mathcal{N}$, $\mathcal{I}$, $\mathcal{A} = \{\mathcal{A}_i \mid i \in \mathcal{N}\}$,
$\phi$, $r$, $b_0$ and $T$ from the open Dec-POMDP of Eq.~\eqref{eq:open-decpomdp}, adds the
task set $\mathcal{K}$, reads the state $\mathcal{S}$ as carrying the tasks present, and gives each
agent an availability map $\mathcal{A}_i^{+}$ and an extended observation set
$\mathcal{O}_i^{+}$, collected as $\mathcal{A}^{+} = \{\mathcal{A}_i^{+} \mid i \in \mathcal{N}\}$
and $\mathcal{O}^{+} = \{\mathcal{O}_i^{+} \mid i \in \mathcal{N}\}$.
TO introduces these changes; AO is inherited from the open Dec-POMDP.

\textbf{Task openness.} Tasks do not act, observe, or receive reward, so they need no counterpart to
the coalitions $\mathcal{I}$; we carry them in the state. Every state is a pair:
\begin{equation}
\label{eq:tasks-in-state}
s^t = (x^t, K^t), \qquad K^t \subseteq \mathcal{K},
\end{equation}
where $K^t$ is the set of tasks present at step $t$ and $x^t$ holds the remaining world features,
among them those of the agents and the tasks in $K^t$. The state-reading components keep
their form: $\phi$ makes tasks appear and disappear through the same transition that moves the
state, $r$ may depend on $K^t$, and $b_0$ draws $K^0$ with $s^0$. The tasks present also decide which
actions an agent may take: each agent carries an action availability map
$\mathcal{A}_i^{+} : \mathcal{S} \to 2^{\mathcal{A}_i} \setminus \{\emptyset\}$, and we write
$\mathcal{A}_i^t \triangleq \mathcal{A}_i^{+}(s^t)$ for the actions available to agent $i$ in $s^t$,
so that an action on a task is available only while that task is in $K^t$ and the \textsc{noop}
action, available in every state, keeps $\mathcal{A}_i^t$ nonempty. The joint action is then:
\begin{equation}
\label{eq:available-actions}
\mathbf{a}^t \in \times_{i \in I^t} \mathcal{A}_i^t ,
\end{equation}
which depends on $s^t$, so it never offers actions on absent tasks. A policy
$\pi_\theta(a_i^t \mid \tau_i^t)$ reads only its own history $\tau_i^t$, so it can
select only currently available actions (the support of
Eq.~\eqref{eq:available-actions}) when its observation encodes the set of actions
available in $s^t$. We therefore extend the observation set to
$\mathcal{O}_i^{+} \triangleq \mathcal{O}_i \times (2^{\mathcal{A}_i} \setminus \{\emptyset\})$, and
$\phi$ draws each observation from $\mathcal{O}_i^{+}$; the noise of Section~\ref{sec:bg-perception}
perturbs only the $\mathcal{O}_i$ component, so availability is reported faithfully. Setting
$\mathcal{A}_i^{+}(s) = \mathcal{A}_i$ for all $s$ (equivalently, holding the task
set $K^t$ fixed) removes TO and recovers the open Dec-POMDP.

\textbf{Agent openness.} Agent openness is inherited from the open Dec-POMDP: $\phi$ moves agents of
the population $\mathcal{N}$ into and out of the operating coalition $I^t \in \mathcal{I}$. An agent
outside $I^t$ takes no action and receives no observation, so nothing is appended to its history
while it is away and it resumes that history on return. We restrict the model in one way: no
$\mathcal{A}_i$ contains a \textsc{leave} action, so an agent exits only when its state forces it to,
as when a firefighter's suppressant runs out~\citep{eck2020wildfire,patino2025inaugural}, not by
choice. This is a modeling-scope choice matching our domains, not a requirement
TANDEM's solution relies on.

%% file: sections/4-methodology.tex
We present TANDEM, our training method for TAO-Dec-POMDP. We first introduce the two
reads of its single critic and the shared target that trains them
(Section~\ref{sec:method-critic}), then the parameter-free gate and the policy
update (Section~\ref{sec:method-gate}). We next establish three properties
(Section~\ref{sec:method-theory}), discuss its computational complexity 
(Section~\ref{sec:method-complexity}), and show that it stays well-defined as the
agent and task sets change (Section~\ref{sec:method-invariance}).  TANDEM is a CTDE actor-critic method
(Section~\ref{sec:bg-ctde}) that narrows the information gap without
changing what the policy may read at execution: it keeps a single centralized critic,
reads it two ways, and lets a parameter-free gate decide per agent-step how much of
the centralized read the policy trains on. Figure~\ref{fig:tandem} sketches the method
and Algorithm~\ref{alg:tandem} states it.

\subsection{Two reads of one critic}
\label{sec:method-critic}

Rather than training a second value network, TANDEM reads one centralized critic
two ways. It trains the
decentralized policy $\pi_\theta(a_i^t \mid \tau_i^t)$ and the centralized critic
$V_\psi$ of Section~\ref{sec:bg-mappo}. The actor is a width-64 pre-norm
residual MLP (one residual block) over each agent's own information; we realize $V_\psi$ as an attention
encoder over entity tokens: one token for the global context, one per agent, and
one per task present in $K^t$ (Eq.~\eqref{eq:tasks-in-state}). Attention over
per-entity tokens is a standard construction for variable entity
sets~\citep{hu2021updet,rahman2021gpl,anilmohito} and accepts any number of agents
and tasks (Section~\ref{sec:method-invariance}). The
\emph{centralized (global) read} $V_i^{G,t} = V_\psi(s^t)_i$ scores agent $i$ from the
full token set, where $V_\psi(\cdot)_i$ is the critic's per-agent readout at agent $i$'s
token, unlike MAPPO's single shared $V(s)$. The \emph{decentralized (local) read} $V_i^{W,t} = V_\psi(m_i(s^t))_i$ applies a
mask $m_i$ that hides the tokens agent $i$ does not observe, so $V_i^{W,t}$
depends only on the information agent $i$ will have at execution, namely the
current masked state (a subset of the information in $\tau_i^t$). This is deliberate:
TANDEM targets the state--observation gap, not the memory gap, so the decentralized read is
conditioned on the masked state rather than the full history $\tau_i^t$; unlike history or
history-state critics, it isolates what the global state adds over the local observation.
Both reads come from one batched forward pass (the decentralized read is per agent, so a step uses $1{+}n$ views).

A single regression target trains both reads. From each we form a per-agent
advantage by GAE (Section~\ref{sec:bg-mappo}), $\hat{A}_i^{G,t}$ and
$\hat{A}_i^{W,t}$, under the same rewards. Both reads are regressed
to one target, the centralized return $y_i^t = \hat{A}_i^{G,t} + V_i^{G,t}$,
\begin{equation}
\label{eq:tandem-critic}
L(\psi) = \mathbb{E}_{i,t}\!\left[\bigl(V_i^{G,t} - y_i^t\bigr)^2 +
\bigl(V_i^{W,t} - y_i^t\bigr)^2\right] ,
\end{equation}
where both reads and the target are value-normalized (targets rescaled by running return statistics) as in standard MAPPO.
Because both reads pursue the same target, they come to agree wherever agent $i$'s
information is sufficient, and differ only where it is not.

\subsection{Gated advantage and policy update}
\label{sec:method-gate}

TANDEM decides per agent-step, with a parameter-free gate, how much of the
centralized read to use. When the two reads agree on whether an action did better
or worse than expected, the decentralized read already carries what the agent can
act on at execution; when they disagree, the local view is misleading and the
centralized read should correct it. TANDEM encodes this with a per-agent,
per-step gate:
\begin{equation}
\label{eq:tandem-gate}
\chi_i^t = \tanh\!\left(\frac{\bigl[-\,\hat{A}_i^{G,t}\,\hat{A}_i^{W,t}\bigr]_{+}}
{\mathrm{MAD} + \varepsilon_0}\right),
\end{equation}
where $[\,\cdot\,]_{+} = \max(\cdot,0)$ (the positive part, i.e., a ReLU) and
$\mathrm{MAD}$ is the median absolute deviation (the median of the absolute
deviations from the median) of the products $\hat{A}_j^{G,s}\,\hat{A}_j^{W,s}$ over the active
agent-steps of the batch (agent $i$ is active at step $t$ when $i \in I^t$). The
constant $\varepsilon_0 = 1.0\times10^{-8}$ keeps this denominator positive when
$\mathrm{MAD}=0$, and $\chi_i^t = 0$ for inactive agents. The numerator
$\bigl[-\hat{A}_i^{G,t}\hat{A}_i^{W,t}\bigr]_{+}$ is nonzero exactly when the two
advantages have opposite signs, which we call a \emph{sign-disagreement}: when they
agree in sign the numerator vanishes and the gate is closed ($\chi_i^t = 0$), and on
a sign-disagreement the gate opens ($\chi_i^t > 0$) by an amount that grows with the
size of the disagreement.

The policy is then updated on the mixed advantage:
\begin{equation}
\label{eq:tandem-mix}
\hat{A}_i^t = \chi_i^t\,\hat{A}_i^{G,t} + \bigl(1-\chi_i^t\bigr)\,\hat{A}_i^{W,t},
\end{equation}
through the clipped surrogate of Eq.~\eqref{eq:mappo} with the advantage
$\hat{A}^t$ replaced by $\hat{A}_i^t$, standardized over the active agent-steps as
in standard MAPPO. A closed gate ($\chi_i^t = 0$) thus \emph{withdraws} the
centralized critic on the agent-steps where the decentralized read suffices, and
an open gate ($\chi_i^t > 0$) \emph{reinstates} it where needed. At
its extremes the update is MAPPO's centralized-critic step when the gate is fully
open ($\chi_i^t \to 1$) and a decentralized actor-critic step on
$\hat{A}_i^{W,t}$ when it is closed ($\chi_i^t \equiv 0$); the gate chooses
between them per agent-step, and the shared target
(Equation~\eqref{eq:tandem-critic}) moves the operating point toward the
decentralized end as the reads come to agree.

\begin{algorithm}[t]
\begin{algorithmic}[1]
\Procedure{Train}{environment, actor $\pi_\theta$, critic $V_\psi$}
  \Repeat
    \State roll out $\pi_\theta$: each agent acts \Comment{actor: residual MLP}
    \State $V^{G}, V^{W} \gets \Call{ReadCritic}{s^t}$ \Comment{one critic, two reads}
    \State form advantages $\hat{A}^{G}, \hat{A}^{W}$ by GAE \Comment{same rewards}
    \State $\hat{A} \gets \Call{Gate}{\hat{A}^{G}, \hat{A}^{W}}$ \Comment{mix per agent-step}
    \State update $\theta$ on $\hat{A}$ \Comment{clipped PPO, Eq.~\eqref{eq:mappo}}
    \State fit both reads to the shared target \Comment{Eq.~\eqref{eq:tandem-critic}}
    \State log $\delta$, the sign-disagreement fraction \Comment{diagnostic}
  \Until{environment-step budget is spent}
  \State \Return $\pi_\theta$ \Comment{execution runs the actor alone}
\EndProcedure
\Function{ReadCritic}{$s^t$}
  \State \Return $V_\psi(s^t)$, $V_\psi(m_i(s^t))$ \small{centralized, decentralized read}
\EndFunction
\Function{Gate}{$\hat{A}^{G}, \hat{A}^{W}$}
  \State \Return $\hat{A}$ with $\chi{=}0$ \small{where agree in sign, else open \Comment{Eqs.~\eqref{eq:tandem-gate},~\eqref{eq:tandem-mix}}}
\EndFunction
\end{algorithmic}
\caption{TANDEM training. One centralized critic is read two ways each step; a
parameter-free gate mixes the two advantages per agent-step and a shared target
trains both reads; execution then runs the actor alone.}
\label{alg:tandem}
\end{algorithm}

\begin{figure}[t]
\centering
\includegraphics[width=\columnwidth]{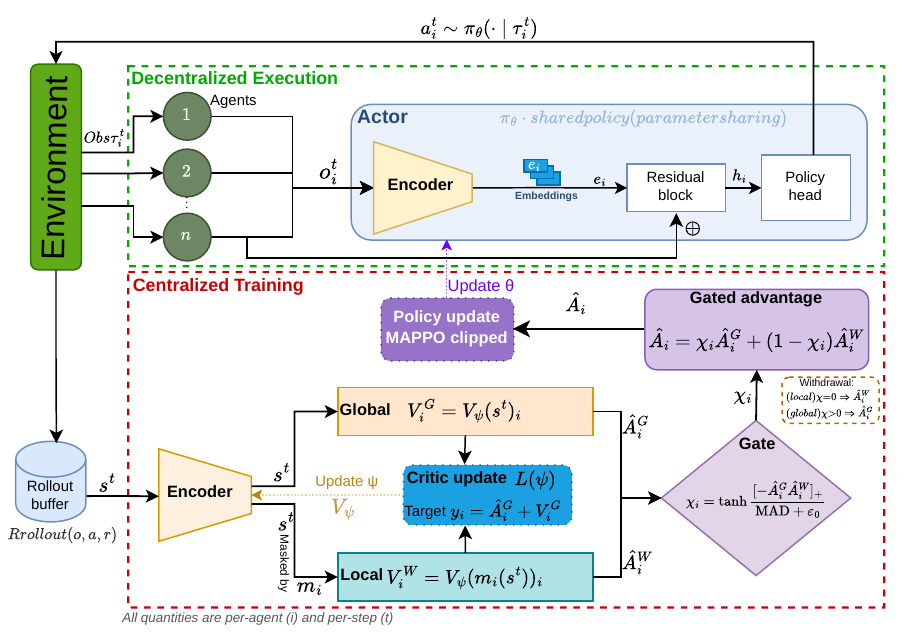}
\caption{TANDEM architecture. One centralized critic is read two ways each step: a
centralized read $V_i^{G,t}$ from the global state $s^t$ and a decentralized read
$V_i^{W,t}$ from the acting agent's information. A shared target trains both reads
(Eq.~\eqref{eq:tandem-critic}); the gate $\chi_i^t$ (Eq.~\eqref{eq:tandem-gate})
mixes their advantages into the policy update (Eq.~\eqref{eq:tandem-mix}).}
\label{fig:tandem}
\end{figure}

\subsection{Theoretical properties}
\label{sec:method-theory}

We establish three properties of TANDEM. First, the information gap degrades the
decentralized policy both at execution and during training
(Proposition~\ref{prop:gap}). Second, TANDEM trains the policy on an advantage
realizable from each agent's own information (Proposition~\ref{prop:fix}). Third,
the gate is scale-free and adds no bias where the two reads agree
(Proposition~\ref{prop:scale}).

\begin{proposition}[the information gap degrades performance]
\label{prop:gap}
Let the policy be restricted to each agent's own information $\tau_i^t$, as
required in Section~\ref{sec:problem}. If the state $s^t$ is not determined
by that information (through partial observability, observation noise, or
openness), then \emph{(i)} at execution no such policy can in general match the
state-conditioned behavior the centralized critic rewards, so its return is
bounded below that benchmark by the value gap of partially observed control; and
\emph{(ii)} during training the centralized advantage $\hat{A}_i^{G,t}$ varies with
state the agent cannot observe and attributes the joint outcome to the agent's
individual action, so on its own it is high-variance and credit-confounded, and it
drives the policy toward the very state-conditioned behavior of \emph{(i)} it
cannot reproduce.
\end{proposition}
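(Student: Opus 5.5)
Part \emph{(i)} reduces to the standard comparison between state-conditioned and history-conditioned control in the TAO-Dec-POMDP of Eq.~\eqref{eq:model}. Part \emph{(ii)} reduces to a variance and bias decomposition of the centralized advantage by conditioning on $\tau_i^t$. Throughout, the benchmark is the return $J^\ast_G = \max_{\pi^G} \mathbb{E}[\sum_t r(s^t,\mathbf{a}^t)]$ over joint policies $\pi^G(\mathbf{a}^t\mid s^t)$ that may condition on the global state. The decentralized optimum $J^\ast_D$ is the same maximum over policies of the form $\prod_{i\in I^t}\pi(a_i^t\mid\tau_i^t)$.

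For \emph{(i)}, every decentralized joint policy induces a state-conditioned policy $\pi^G(\mathbf{a}\mid s)$ by marginalizing the histories given $s^t$. Hence $J^\ast_D \le J^\ast_G$. I then exhibit the strict gap that the phrase ``in general'' calls for. I would use a two-state construction with $s^t=(x^t,K^t)$: two states induce the same distribution over $\tau_i^t$, so they are not determined by the agent's information, and they have different optimal actions. Because the policy's action distribution is a function of $\tau_i^t$ alone, it must play the same mixture in both states. This costs a fixed positive amount at each such step.

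Each of the three named sources can produce such a pair of states. Under partial observability, the two states differ only in an unobserved task. Under observation noise, the noise is applied to the $\mathcal{O}_i$ component and makes the two states' observations overlap. Under openness, the states differ in $I^t$ or $K^t$ while showing the same local view. The shortfall $J^\ast_G - J^\ast_D > 0$ is exactly the value gap of partially observed control. I would define it by that difference rather than bounding it in closed form.

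For \emph{(ii)}, I decompose the centralized advantage by the law of total variance, conditioning on $(\tau_i^t, a_i^t)$:
\begin{equation*}
\mathrm{Var}\bigl[\hat A_i^{G,t}\bigr] = \mathbb{E}\bigl[\mathrm{Var}(\hat A_i^{G,t}\mid \tau_i^t,a_i^t)\bigr] + \mathrm{Var}\bigl[\mathbb{E}(\hat A_i^{G,t}\mid\tau_i^t,a_i^t)\bigr].
\end{equation*}
Only the second term moves the policy gradient in expectation, since the policy sees nothing beyond $(\tau_i^t,a_i^t)$. The first term is therefore pure noise, and it is strictly positive whenever $V_\psi(s^t)$ varies over states consistent with $\tau_i^t$. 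That is precisely the non-identifiability hypothesis, which gives the high-variance claim.

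For credit confounding, I write the one-step TD error as $r(s^t,\mathbf{a}^t)+V(s^{t+1})-V(s^t)$. The reward and the next state depend on the whole joint action $\mathbf{a}^t$, including $a_{-i}^t$, and on $I^{t+1}$ and $K^{t+1}$ drawn by $\phi$. The agent's advantage therefore contains terms that its own action did not cause. Finally, the gradient pushes $\pi(a_i\mid\tau_i)$ toward $\arg\max_{a_i}\mathbb{E}[\hat A_i^{G}\mid\tau_i,a_i]$. This is the target the centralized critic rewards, and it is an average over hidden states of state-specific optima, which is the behavior shown unreachable in \emph{(i)}.

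The main obstacle is that the statement is informal. ``In general'', ``high-variance'' and ``credit-confounded'' need to be fixed as claims that existence of a counterexample establishes. I would state each as such an existence claim and make a single explicit construction serve all three parts. I expect the openness case to be the delicate one, because one must check that varying $I^t$ under the same local view is consistent with the dynamics $\phi$.
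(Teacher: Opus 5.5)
Your part \emph{(i)} follows the paper's route and sharpens it. The paper only invokes the standard value gap between state-based and information-limited control; you add the embedding $J^\ast_D\le J^\ast_G$ and a separating instance. Two small fixes are needed there. Index the induced policy by time and condition it on the coalition, $\pi^G_t(\mathbf{a}\mid s,I)=P_\pi(\mathbf{a}^t=\mathbf{a}\mid s^t=s,I^t=I)$. Equal returns then follow by matching state--action marginals inductively, because $\phi$ makes $(s^t,I^t)$ a controlled Markov chain and $r$ depends only on $(s^t,\mathbf{a}^t)$. Openness is not delicate either: in an existence claim you choose $\phi$.

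The genuine gap is in \emph{(ii)}. Your claim that $\mathbb{E}[\mathrm{Var}(\hat{A}_i^{G,t}\mid\tau_i^t,a_i^t)]>0$ whenever $V_\psi(s^t)$ varies over states consistent with $\tau_i^t$ is false, because a state baseline acts as a control variate. Take a single step with hidden $s\in\{s_1,s_2\}$ and reward $r=c(s)+d(a_i)$, where $c(s_1)\neq c(s_2)$. Then $V^G(s)=c(s)+\bar d$ with $\bar d=\mathbb{E}_\pi[d(a_i)]$, so $\hat{A}_i^{G}=d(a_i)-\bar d$ depends on $a_i$ alone. The decentralized read gives $\hat{A}_i^{W}=c(s)-\bar c+d(a_i)-\bar d$ with $\bar c=\mathbb{E}[c(s)]$, whose conditional variance is $\mathrm{Var}(c(s))>0$. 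More generally, take exact reads and $\lambda=1$. Both advantages are then residuals of the same return, projected onto $\sigma(s^t)\supseteq\sigma(m_i(s^t))$, so $\mathrm{Var}(\hat{A}_i^{G,t})\le\mathrm{Var}(\hat{A}_i^{W,t})$. Your decomposition is therefore only a Rao--Blackwell comparison with the uncomputable $\mathbb{E}[\hat{A}_i^{G,t}\mid\tau_i^t,a_i^t]$, and it applies verbatim to $\hat{A}_i^{W,t}$. Its first term is positive exactly when $\hat{A}_i^{G,t}$ itself, not $V_\psi$, fails to be a function of $(\tau_i^t,a_i^t)$. So ``high-variance'' can be stated in that Rao--Blackwell sense, or as an existence claim about the score-weighted gradient variance on an explicit instance. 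Such instances exist, for example when $V^G$ is dominated by a frequently taken action while the score weight sits on a rare one. It cannot be stated as a blanket comparison with the decentralized advantage.

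The closing step of \emph{(ii)} also fails as written. The quantity $\arg\max_{a_i}\mathbb{E}[\hat{A}_i^{G,t}\mid\tau_i^t,a_i]$ is a function of $\tau_i^t$, so a decentralized policy can realize it. It is the local best response, not the state-conditioned behavior of \emph{(i)*}. It is also not an average of state-specific optima, since an argmax of a hidden-state average is not an average of argmaxes. Your own conditioning shows more. Given $\tau_i^t$, $a_i^t$ is drawn independently of $s^t$, so the $-V_\psi(s^t)$ term contributes nothing to the expected gradient. At $\lambda=1$, where nothing is bootstrapped, the expected unclipped update is then the ordinary policy gradient of the decentralized objective.

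The clause ``drives the policy toward the state-conditioned behavior'' can therefore be defended in only two ways. The first is per sample: each update reinforces the action that was good in the realized $s^t$, which is your conditional-variance term again rather than a separate effect. The second is bootstrapping bias for $\lambda<1$, since the state-only $V_\psi(s^{t+l+1})$ is not the history-conditional continuation value. On that route you would still have to show that the bias points toward the state-conditioned optimum. The paper's sketch asserts \emph{(ii)} at the same informal level and gives neither argument, so following it will not close this gap.
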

\begin{proof}[Proof sketch]
\emph{(i)} Because $s^t$ is not a function of agent
$i$'s information, the best information-restricted policy need not match the
state-conditioned optimum; the difference is the standard value gap between
state-based and information-limited control~\citep{oliehoek2016decpomdp}.
\emph{(ii)} The centralized advantage $\hat{A}_i^{G,t}$ is evaluated from $s^t$
rather than from $\tau_i^t$, so its fluctuations with unobserved state inflate the
variance of the local update. It also credits agent $i$ with transitions driven by
the joint action, an unrealizable, state-conditioned target, so training on it
alone destabilizes learning while steering the policy toward the behavior of
\emph{(i)}.
\end{proof}

\begin{proposition}[TANDEM trains on a realizable advantage]
\label{prop:fix}
Wherever the two advantages agree in sign, the mixed advantage equals the decentralized
advantage ($\chi_i^t = 0 \Rightarrow \hat{A}_i^t = \hat{A}_i^{W,t}$), which is
realizable, in expectation, from agent $i$'s information; the centralized read
enters only on sign-disagreements. The shared target of Equation~\eqref{eq:tandem-critic}
drives $V^{W}$ toward $V^{G}$ on that information, so the two reads differ only up
to the residual information gap of Proposition~\ref{prop:gap}. TANDEM thus restricts
the high-variance centralized advantage to the sign-disagreement steps, where the local
view is misleading, and trains on the realizable decentralized advantage everywhere
else.
\end{proposition}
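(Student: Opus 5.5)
The proposition has three claims, and I will prove them in turn. The first is algebraic, the second is a statement about what the critic regression converges to, and the third is a counting argument.

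\textbf{The closed gate.} From Eq.~\eqref{eq:tandem-gate}, the numerator $\bigl[-\hat{A}_i^{G,t}\hat{A}_i^{W,t}\bigr]_+$ is zero exactly when $\hat{A}_i^{G,t}\hat{A}_i^{W,t}\ge 0$. The denominator $\mathrm{MAD}+\varepsilon_0$ is strictly positive, and $\tanh(0)=0$. So agreement in sign, including the case where either advantage is zero, gives $\chi_i^t=0$. Substituting into Eq.~\eqref{eq:tandem-mix} gives $\hat{A}_i^t=\hat{A}_i^{W,t}$. By the same formula, $\chi_i^t>0$ only on a strict sign-disagreement, so $\hat{A}_i^{G,t}$ enters the update only there. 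Inactive agents have $\chi_i^t=0$ by definition and need no separate treatment.

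\textbf{Realizability of $\hat{A}_i^{W,t}$.} $V_i^{W,t}=V_\psi(m_i(s^t))_i$ is a function of the masked state, which Section~\ref{sec:method-critic} places inside agent $i$'s execution information. The GAE estimate $\hat{A}_i^{W,t}$ is built from the shared rewards and from $V^W$ at $t$ and later steps. Its expectation given $m_i(s^t)$ and $a_i^t$ therefore depends only on what agent $i$ conditions on, up to the action-dependence that any on-policy advantage carries. That is the sense of ``realizable in expectation'' I would make precise. Next I treat the shared target. For a fixed $\psi$-independent target $y_i^t$, the second term of Eq.~\eqref{eq:tandem-critic} is a squared-error regression. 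Its population minimizer over functions of $m_i(s^t)$ is
\begin{equation*}
V_i^{W,\star}=\mathbb{E}\bigl[y_i^t\mid m_i(s^t)\bigr],
\end{equation*}
and the first term's minimizer is $V_i^{G,\star}=\mathbb{E}[y_i^t\mid s^t]$. By the tower property, $V_i^{W,\star}=\mathbb{E}[V_i^{G,\star}\mid m_i(s^t)]$, which is the $L^2$ projection of the centralized read onto agent $i$'s information. The residual is $V_i^{G,\star}-V_i^{W,\star}$. Its variance equals the part of $V^G$ not determined by $m_i(s^t)$, which is zero exactly when $s^t$ (as far as $V^G$ sees it) is determined by agent $i$'s information. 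This is the residual information gap of Proposition~\ref{prop:gap}.

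\textbf{Localization of the centralized term.} The final sentence follows by combining the two steps. Write the update as
\[
\hat{A}_i^t=\hat{A}_i^{W,t}+\chi_i^t\bigl(\hat{A}_i^{G,t}-\hat{A}_i^{W,t}\bigr).
\]
The correction term is supported on sign-disagreement steps, and its size there is bounded by $|\hat{A}_i^{G,t}-\hat{A}_i^{W,t}|$, since $\chi_i^t<1$. By Proposition~\ref{prop:gap}(ii), the high-variance, credit-confounded component lives in this correction term.

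\textbf{Main obstacle.} The hard step is the regression argument, because the target $y_i^t=\hat{A}_i^{G,t}+V_i^{G,t}$ is itself produced by the critic being trained, and the function class is one shared attention network rather than arbitrary functions. I would handle this by stating the claim at a fixed point: treat $y$ as fixed per iteration, as in the stop-gradient target of standard MAPPO, and assume the critic is expressive enough on each view. Under those assumptions the statement is a characterization of the minimizer, not a convergence theorem, and I would state it explicitly as such.
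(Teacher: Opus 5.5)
Your proposal is correct and follows the paper's own sketch. The gate vanishes exactly when $\hat{A}_i^{G,t}\hat{A}_i^{W,t}\ge 0$, so Eq.~\eqref{eq:tandem-mix} reduces to $\hat{A}_i^{W,t}$; and minimizing Eq.~\eqref{eq:tandem-critic} sends the two reads to $\mathbb{E}[y_i^t\mid m_i(s^t)]$ and $\mathbb{E}[y_i^t\mid s^t]$, which coincide where agent $i$'s information determines $s^t$. Your tower-property projection, the rewrite $\hat{A}_i^t=\hat{A}_i^{W,t}+\chi_i^t(\hat{A}_i^{G,t}-\hat{A}_i^{W,t})$, and the explicit stop-gradient and expressivity assumptions make precise what the paper leaves implicit.

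One caution on realizability. Saying that $\mathbb{E}[\hat{A}_i^{W,t}\mid m_i(s^t),a_i^t]$ depends only on agent $i$'s information is true of any random variable, including $\hat{A}_i^{G,t}$, so it does not separate the two reads. The distinguishing fact is this: each sample of $\hat{A}_i^{W,t}$ is built from rewards and from $V^W$ evaluated on agent $i$'s own masked views, whereas $\hat{A}_i^{G,t}$ uses $V^G$ on hidden state.
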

\begin{proof}[Proof sketch]
The gate of Eq.~\eqref{eq:tandem-gate} is zero iff
$\hat{A}_i^{G,t}\hat{A}_i^{W,t} \ge 0$, giving $\hat{A}_i^t = \hat{A}_i^{W,t}$;
otherwise the gate opens ($\chi_i^t > 0$) and reinstates $\hat{A}_i^{G,t}$.
Minimizing Eq.~\eqref{eq:tandem-critic} drives each read to the conditional
expectation of its (bootstrapped) target, $V^{W} \to \mathbb{E}[y_i^t \mid m_i(s^t)]$
and $V^{G} \to \mathbb{E}[y_i^t \mid s^t]$, which coincide where agent $i$'s information
determines $s^t$ and differ by the information gap elsewhere.
\end{proof}

\begin{proposition}[scale-freedom and unbiasedness at agreement]
\label{prop:scale}
If the two advantages agree in sign on every active agent-step (equivalently the
sign-disagreement fraction $\delta$, the share of active agent-steps where the two
advantages disagree in sign, is $0$), then the gate is closed ($\chi_i^t \equiv 0$)
and the policy (actor) update is exactly a decentralized step on $\hat{A}^{W}$,
adding no bias relative to the decentralized objective. The gate is
also scale-free up to the guard $\varepsilon_0$: for any $c > 0$, replacing
$(\hat{A}^{G}, \hat{A}^{W})$ by $(c\hat{A}^{G}, c\hat{A}^{W})$ leaves $\chi$
unchanged as $\varepsilon_0 \to 0$.
\end{proposition}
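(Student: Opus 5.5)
The proof splits into two independent parts that follow from the gate of Eq.~\eqref{eq:tandem-gate} and the mix of Eq.~\eqref{eq:tandem-mix}. The first part covers agreement: closed gate, no bias. The second covers scale-freedom.

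\emph{Agreement part.} If $\delta=0$, every active agent-step has $\hat{A}_i^{G,t}\hat{A}_i^{W,t}\ge 0$. Hence $[-\hat{A}_i^{G,t}\hat{A}_i^{W,t}]_+=0$. Because $\mathrm{MAD}+\varepsilon_0>0$, the argument of $\tanh$ is $0$, so $\chi_i^t=\tanh(0)=0$. Inactive agent-steps have $\chi_i^t=0$ by definition, which gives $\chi_i^t\equiv 0$. Substituting into Eq.~\eqref{eq:tandem-mix} gives $\hat{A}_i^t=\hat{A}_i^{W,t}$ on every agent-step. The standardization over active agent-steps then acts on exactly the same array a decentralized actor-critic would standardize. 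So the clipped surrogate of Eq.~\eqref{eq:mappo} coincides term by term with the decentralized step on $\hat{A}^{W}$. In this sense "no bias relative to the decentralized objective" follows by identity of the update, not by any expectation argument. I would state this carefully: the claim is exact equality of the actor loss, not unbiasedness of $\hat{A}^W$ itself. The converse direction ($\chi\equiv0\Rightarrow\delta=0$) also holds, since $\tanh(x)=0$ iff $x=0$. This justifies the word ``equivalently'' when the products are nonzero.

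\emph{Scale part.} Replace $(\hat{A}^G,\hat{A}^W)$ by $(c\hat{A}^G,c\hat{A}^W)$ with $c>0$. Every product $p_j^s=\hat{A}_j^{G,s}\hat{A}_j^{W,s}$ becomes $c^2p_j^s$. The numerator is positively homogeneous, so $[-c^2p]_+=c^2[-p]_+$. The MAD is also positively homogeneous of degree one in its data. The median commutes with multiplication by $c^2>0$, so the absolute deviations scale by $c^2$, and so does their median. The gate argument therefore becomes
\[
\frac{c^2[-p_i^t]_+}{c^2\,\mathrm{MAD}+\varepsilon_0}.
\]
If $\mathrm{MAD}>0$, letting $\varepsilon_0\to0$ gives $[-p_i^t]_+/\mathrm{MAD}$, independent of $c$. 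Continuity of $\tanh$ then shows $\chi$ is unchanged in the limit.

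The main obstacle is the degenerate case $\mathrm{MAD}=0$, where the limit is not scale-free in a meaningful sense. In that case the argument is $0$ when the numerator vanishes, and it diverges to $+\infty$ (so $\chi\to1$) otherwise. In both sub-cases the limit is again independent of $c$, since $c^2[-p]_+>0$ iff $[-p]_+>0$. I would handle this by splitting on $\mathrm{MAD}>0$ versus $\mathrm{MAD}=0$, which yields the claim ``up to the guard $\varepsilon_0$'' in every case.
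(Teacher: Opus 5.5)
Your proof is correct and follows the paper's own argument. Under $\delta=0$ the numerator $[-\hat{A}_i^{G,t}\hat{A}_i^{W,t}]_{+}$ vanishes, so $\chi\equiv0$ and $\hat{A}_i^t=\hat{A}_i^{W,t}$. Since the products and their MAD both scale by $c^2$, the gate argument becomes $c^2[-p_i^t]_{+}/(c^2\mathrm{MAD}+\varepsilon_0)\to[-p_i^t]_{+}/\mathrm{MAD}$.

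Your case split on $\mathrm{MAD}=0$ covers a case the paper's sketch leaves implicit, since its limit presumes $\mathrm{MAD}_0>0$. One small slip: the statement's ``equivalently'' only restates the definition of $\delta$, not $\chi\equiv0\Leftrightarrow\delta=0$. Your converse is still true, and it needs no caveat about nonzero products, because $\varepsilon_0>0$ makes $\chi_i^t=0$ exactly when the product is $\ge 0$.
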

\begin{proof}[Proof sketch]
$\delta = 0$ means
$\hat{A}_i^{G,t}\hat{A}_i^{W,t} \ge 0$ for all active $(i,t)$, so the numerator of
Eq.~\eqref{eq:tandem-gate} vanishes and the gate is closed ($\chi \equiv 0$);
Eq.~\eqref{eq:tandem-mix} then gives $\hat{A}_i^t = \hat{A}_i^{W,t}$, so the
policy update of Eq.~\eqref{eq:mappo} is the decentralized step. The critic
still regresses $V^{W}$ to the centralized target $y$ of
Eq.~\eqref{eq:tandem-critic}, so only the actor update is decentralized. For
scale-freedom, write $N_0 = [-\hat{A}_i^{G,t}\hat{A}_i^{W,t}]_{+}$ and
$\mathrm{MAD}_0$ for the unscaled numerator and MAD; both the product and its MAD
scale as $c^2$, so the ratio in Eq.~\eqref{eq:tandem-gate} becomes
$c^2 N_0 / (c^2\,\mathrm{MAD}_0 + \varepsilon_0)$, which tends to
$N_0/\mathrm{MAD}_0$ as $\varepsilon_0 \to 0$.
\end{proof}

\subsection{Computational complexity}
\label{sec:method-complexity}

At execution, TANDEM costs the same as a decentralized actor; during training, it adds a factor of $1{+}n$ on the critic. Let $n$ agents act over a horizon of $T$
steps, with network widths fixed. At execution, it runs only the
actor on each agent's own information, costing $O(nT)$ forward passes and no
critic, as a fully decentralized policy does. During training, a rollout
costs $O(nT)$ actor evaluations, and the critic is read over $1{+}n$ views
per step (the global-state read and one per agent) in a single batched pass over
its at most $1{+}n{+}|\mathcal{K}|$ entity tokens, giving
$O\!\bigl(T\,(1{+}n)\,(1{+}n{+}|\mathcal{K}|)^2\bigr)$ per epoch, the quadratic
factor from attention over the tokens. TANDEM is therefore as scalable as a standard centralized-critic
actor-critic: it adds no new parameters, keeping a shared-parameter
actor and a single centralized, state-value critic~\citep{mappo}, read one
extra way per agent, and changing only the advantage it trains on.
Execution is unchanged from MAPPO, and the method stays well-defined as the agent
and task sets change (Section~\ref{sec:method-invariance}).

\subsection{Invariance to the agent and task sets}
\label{sec:method-invariance}

TANDEM stays well-defined as the sets of present agents and tasks change, by
construction, through its critic and its actor. First, the
critic is an attention encoder over entity tokens, one for the global context, one
per present agent, and one per task in $K^t$, so it accepts any number of agents
and tasks and is equivariant to their ordering. AO and TO change only which tokens
are present, handled by the same masking that defines the two reads.
Second, the actor
$\pi_\theta(a_i^t \mid \tau_i^t)$ is shared across agents and conditioned on an
agent's own information, so it applies unchanged to whichever agents are active
and to the state-dependent action set $\mathcal{A}_i^{t}$ of
Eq.~\eqref{eq:available-actions}. TANDEM therefore needs no fixed agent or
task roster and adds no openness-specific machinery beyond those token masks.

%% file: sections/5-experiments.tex
We evaluate TANDEM against six CTDE
baselines (Section~\ref{sec:bg-ctde}) on nine setups across three cooperative
domains (Figure~\ref{fig:domains}), each isolating one form of the information gap of
Section~\ref{sec:introduction} and one control that removes it. Every method is trained under an identical reward and $1.2$M-step budget and is
measured at execution, where the centralized critic is removed and each agent acts
only on its own $\tau_i^t$ (Section~\ref{sec:bg-ctde}). The supplementary material
lists each setup's gap, agents, tasks, and horizon, with the full variant sweep,
diagnostics, learning curves, network architectures, loss and optimization
details, hyperparameters, the seeding protocol, the per-domain masked reads, and
the compute and software used.

\begin{figure}[t]
\centering
\includegraphics[height=1.75cm]{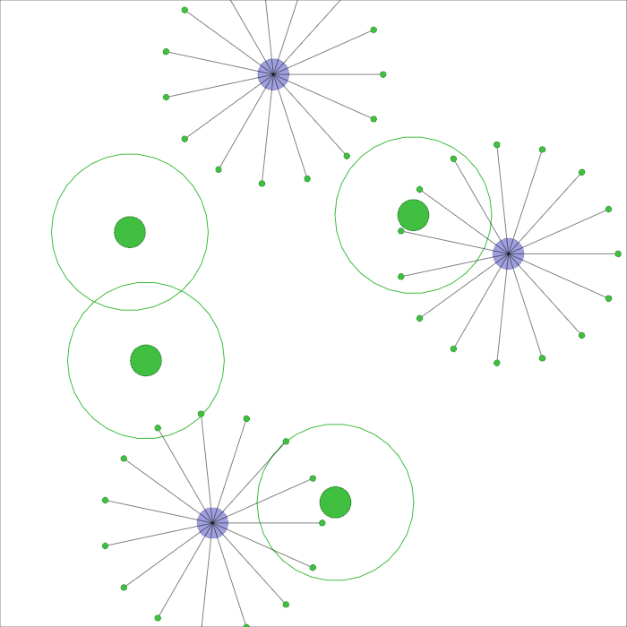}\hfill
\includegraphics[height=1.75cm]{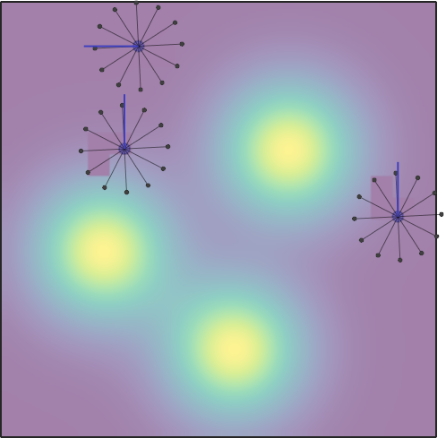}\hfill
\includegraphics[height=1.75cm]{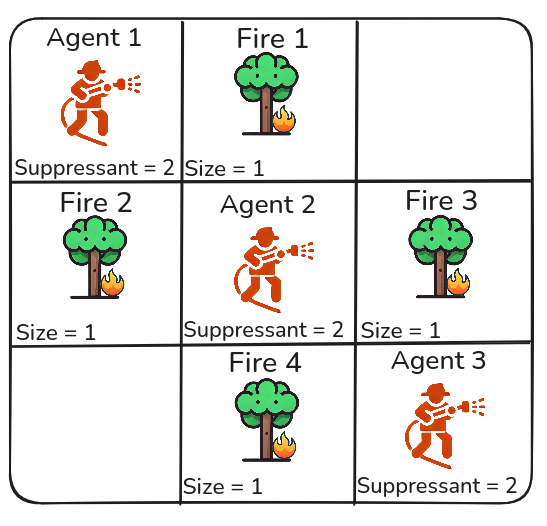}
\caption{The three domains, left to right: discovery, sampling, and wildfire
suppression.}
\label{fig:domains}
\end{figure}

Each domain provides two setups that carry the information gap, a less challenging
one and a more challenging one, together with one control that removes the gap. The
two gap setups raise the difficulty along the domain's axis: a larger team and more
targets in discovery, higher observation noise in sampling, and faster fire spread
and ignition in wildfire suppression. On a control the two critic reads carry the
same information, so the gate stays closed and the actor update is MAPPO's. The
supplementary material gives each setup's agents, tasks, horizon, and full
configuration.

\textbf{Discovery.} A cooperative covering task in
VMAS~\citep{bettini2022vmas}: a team jointly discovers targets, each agent
discovering a target when within its radius and earning reward for every freshly
discovered target. Each agent senses targets only through its short-range LIDAR,
seeing neither absolute positions nor which targets are already discovered, both
held only by the critic's centralized read. Discovery isolates partial
observability: the local view is strictly poorer than the state, with no noise.
The control gives every agent arena-wide LIDAR, removing the partial-observability
restriction.

\textbf{Sampling.} A cooperative information-gathering task in
VMAS~\citep{bettini2022vmas}: a team moves through an arena covered by an
unknown scalar field, a mixture of three Gaussian sources at random
locations, and is rewarded for the field value it samples, each location paying
once. Every step, in training and execution alike, each agent's observation is
perturbed by additive uniform noise and omits a local field view (the nearby
unsampled field, including its direction and remaining amount) that the critic's
centralized read carries. Sampling isolates observation noise: the gap setups
add uniform noise; the control removes it and adds the field view.

\textbf{Wildfire suppression.} A cooperative coordination task on a 3x3
grid~\citep{eck2020wildfire} (a closed and two open fire-rate settings) in
which firefighters act on fires that
occupy cells, each fire a task with a location and intensity. It instantiates  TAO-Dec-POMDP of Section~\ref{sec:problem} with both AO and TO: an agent leaves and rejoins the coalition
as its suppressant runs out and refills, and fires ignite, spread, and are
extinguished within an episode, changing the tasks present. An agent observes its
own position and suppressant, the cells within reach, and a teammate's position,
but not the teammate's suppressant or the per-episode refill probability, both
carried only by the centralized read. The open and more-open gap setups differ in
fire spread and ignition rates. The closed control removes spread, ignition,
and the hidden refill and fully observes teammates. The
team earns a positive reward for each fire extinguished and an equal penalty for each fire that burns out, both scaled by fire size.

\textbf{Baselines.} We compare TANDEM against six CTDE baselines
(Section~\ref{sec:bg-ctde}), grouped by how they use centralized information. The
policy-gradient baselines are: (1) MAPPO~\citep{mappo}, whose policy update TANDEM shares, so
execution return isolates what the two-read gate adds over that step, and (2)
COMA~\citep{foerster2018coma}, which shapes the policy gradient with a
counterfactual baseline from a centralized critic. The value-
factorization
baselines factor a joint action value across agents: (3) QMIX~\citep{rashid2018qmix}, (4) VDN~\citep{sunehag2018vdn}, and
(5) MAVEN~\citep{mahajan2019maven},  whereas (6) MADDPG~\citep{lowe2017maddpg} is an off-policy centralized-critic actor-critic.  All six are standard, well-established CTDE methods that span its policy-gradient, value-factorization, and off-policy centralized-critic families, so the comparison covers how CTDE uses centralized information rather than any single approach.

\textbf{TANDEM configuration.} TANDEM uses the critic, its two reads, the
parameter-free gate, and the shared target exactly as defined in
Sections~\ref{sec:method-critic}--\ref{sec:method-gate}. The actor, the only
component running at execution, conditions on the agent's own observation and a
one-hot identifier. Every method's hyperparameters were tuned with
Optuna~\citep{akiba2019optuna} under the same budget and protocol (supplementary
material), so none is favored by tuning.

\textbf{Training, execution, and statistical protocol.} All methods run 1000 episodes
per run (one policy update per episode over 12 parallel 100-step rollouts) during training, and the
executed policy is the final checkpoint. Seed~42 is only the RNG seed (parameter
initialization and action sampling); it is identical across all methods, so no method
is favored by its draw. The environment seed is distinct from the RNG seed and selects the world
an episode runs in, including its layout, ignitions, and observation noise. During training it is resampled at every episode, so each policy is trained over many environment seeds rather than a single fixed world. Evaluation likewise spans many environment seeds, drawn from a fixed held-out set of
100 worlds (environment seeds $201$--$300$) that is identical for every method and disjoint
from the training and validation seeds, so every evaluation episode is unseen during training.
The evaluation metric is execution return, the mean cumulative team reward per episode
with the critic removed (Section~\ref{sec:bg-ctde}); error bars are its standard error
over the 100 episodes. A difference is \emph{statistically significant} when a paired
two-sided Wilcoxon signed-rank test over these 100 episodes gives $p<0.05$ after Bonferroni correction over the comparisons of a setup, paired because every
method is scored on the identical episodes. Evaluating all methods on the same unseen worlds cancels
the world-to-world difficulty variance that otherwise dominates, so a comparison
reflects the method at a shared RNG seed, not RNG-seed variance.

%% file: sections/6-results.tex
Here we report on TANDEM's execution returns against the six CTDE baselines on the nine
setups (Table~\ref{tab:results}, Figure~\ref{fig:results}).

\textbf{At execution, TANDEM is the best method on 8 of 9 setups}
(Table~\ref{tab:results}) and never significantly below any of the six CTDE
baselines. As Section~\ref{sec:ablations} shows, its widest margins are on the discovery domain, where
the critic's two reads have the widest gap.  

\textbf{TANDEM outperforms all 6 CTDE baselines significantly on 5 of the 9 setups} (Table~\ref{tab:results}): the two discovery gap setups ($3.70$ vs next-best MAPPO
$2.63$, and $6.91$ vs $5.48$), sampling at noise $0.2$ ($37.79$ vs MAPPO $31.68$),
the sampling no-noise control ($41.16$ vs $39.19$), and the wildfire closed control
($48.00$ vs $42.40$). 

\textbf{TANDEM leads on 3 of the 4 remaining setups}: the discovery control ($6.29$
vs MAPPO $5.99$), sampling at noise $0.1$ ($38.87$ vs $37.42$), and the open setup
($290.1$ vs next-best COMA $284.5$). Only on the more-open setup does a baseline's
mean exceed TANDEM's, MAPPO at $315.3$ vs $306.6$; the difference is not
significant (corrected $p=0.68$; Table~\ref{tab:results}), and TANDEM still
outperforms 5 of the 6 baselines there.

\begin{table}[t]
\centering
\scriptsize
\setlength{\tabcolsep}{3pt}
\caption{Execution return (mean over 100 unseen episodes, critic removed, shared
RNG seed~42) of TANDEM and the six CTDE baselines on the nine setups, grouped by
domain (two gap setups and one control each). Bold: best per setup. Superscript
$k$ on TANDEM: the number of the six baselines it outperforms with statistical
significance (Section~\ref{sec:experiments}).}
\label{tab:results}
\resizebox{\columnwidth}{!}{\input{tables/table1_full_results_execution.tex}}
\end{table}

\begin{figure}[t]
\centering
\includegraphics[width=\columnwidth]{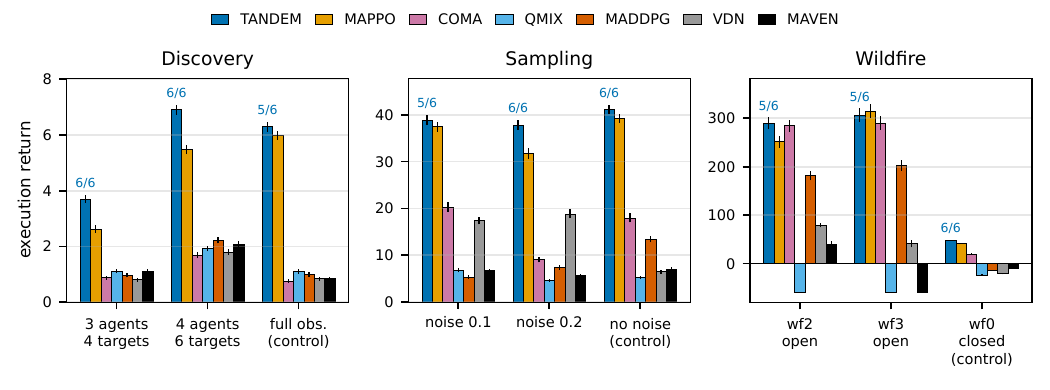}
\caption{Execution return per setup (mean: over 100 unseen episodes; error bars:
standard error; shared RNG seed~42). The label $k/6$ is the
number of the six CTDE baselines TANDEM outperforms with statistical significance (Section~\ref{sec:experiments}).}
\label{fig:results}
\end{figure}

These margins follow the information-gap account of
Section~\ref{sec:method-theory}. The gap between the critic's two reads is widest
on the two discovery gap setups ($0.39$ and $0.41$ in explained variance (EV, the
fraction of return variance the critic predicts); Figure~\ref{fig:gap-disagree}),
as Proposition~\ref{prop:gap} predicts:
a baseline relying entirely or heavily on the centralized critic loses the most once the critic is removed
at execution. Where the reads agree (the sampling no-noise and wildfire
closed controls), the gate stays closed (Section~\ref{sec:method-gate}), yet
its critic still leads.

%% file: tables/table1_full_results_execution.tex
\begin{tabular}{lccccccc}
\toprule
Setup & TANDEM & MAPPO & COMA & QMIX & MADDPG & VDN & MAVEN \\
\midrule
Disc. small & \textbf{3.70$^{6}$} & 2.63 & 0.88 & 1.12 & 0.96 & 0.81 & 1.10 \\
Disc. larger & \textbf{6.91$^{6}$} & 5.48 & 1.68 & 1.93 & 2.23 & 1.79 & 2.07 \\
Disc. control & \textbf{6.29$^{5}$} & 5.99 & 0.77 & 1.10 & 0.99 & 0.84 & 0.85 \\
\midrule
Samp. 0.1 & \textbf{38.87$^{5}$} & 37.42 & 20.34 & 6.81 & 5.37 & 17.38 & 6.79 \\
Samp. 0.2 & \textbf{37.79$^{6}$} & 31.68 & 9.09 & 4.67 & 7.37 & 18.83 & 5.70 \\
Samp. control & \textbf{41.16$^{6}$} & 39.19 & 17.97 & 5.27 & 13.43 & 6.49 & 7.05 \\
\midrule
WF open & \textbf{290.1$^{5}$} & 251.5 & 284.5 & -60.00 & 181.8 & 79.24 & 40.48 \\
WF more open & 306.6$^{5}$ & \textbf{315.3} & 290.5 & -60.00 & 202.5 & 41.60 & -60.00 \\
WF closed (control) & \textbf{48.00$^{6}$} & 42.40 & 19.68 & -23.76 & -13.44 & -20.32 & -9.84 \\
\bottomrule
\end{tabular}

%% file: sections/7-ablations.tex
We isolate what drives TANDEM's performance by removing each of its two
mechanisms in turn, the gate or the shared target
(Table~\ref{tab:ablate-gap}, Figure~\ref{fig:gap-disagree}), while holding the
rest fixed and measuring the change in execution return against the complete TANDEM method (i.e., the \emph{full method})
using the same significance test of Section~\ref{sec:experiments}. 
Both act only on the information gap. Differences that miss
significance are read as not distinguishable rather than ordered.

\textbf{Gate.}
\label{sec:ablation-gate}The gate engages only where a gap exists. For the
full method the gap, gate mean $\bar\chi$, and $\delta$ are nonzero on all
six gap setups and exactly zero on the sampling no-noise and wildfire closed
controls, where the reads agree, the gate stays closed, and every variant
returns the same value (41.16 and 48.00; Table~\ref{tab:ablate-gap}). Across the gap setups the gap is widest on the discovery domain (0.39 and
0.41) and $\delta$ highest on the sampling domain (25.9\% and 25.3\%), so $\delta$ marks the gap's presence,
not its width, across the nine setups (Figure~\ref{fig:gap-disagree}, left;
supplementary material, Table~S3).  Note also that forcing the gate closed ($\chi_i^t \equiv 0$) trains the actor on the
decentralized advantage alone, withdrawing the centralized read on every step
(Proposition~\ref{prop:fix}). Its clearest effect is on the open wildfire
setup, where execution return falls from 290.1 to 272.6
(Table~\ref{tab:ablate-gap}), significantly below the full method (corrected
$p<0.05$; supplementary material, Table~S1): the gate reinstates the centralized read exactly where
openness opens a gap. That is worth the return at execution. On the
other five gap setups, the gate-closed change stays within the per-episode variation
and misses significance (Table~\ref{tab:ablate-gap}).

\begin{table}[t]
\centering
\footnotesize
\setlength{\tabcolsep}{4pt}
\caption{Execution return (mean over 100 unseen episodes) of TANDEM and the
two ablations gate forced closed ($\chi\equiv0$) and no shared target; spread =
max $-$ min of the three per setup (see the supplementary materials for setup details).
Shared RNG seed~42.}
\label{tab:ablate-gap}
\input{tables/table2_ablate_the_gap.tex}
\end{table}

\textbf{Shared target.}
\label{sec:ablation-target}
Removing the shared target lets the decentralized read regress on its own return
instead of the shared target $y_i^t$ (Eq.~\eqref{eq:tandem-critic}). Its
measured effect is at the critic, as designed: on the gap setups it lowers the
decentralized read's fit, its explained variance (EV), and raises $\delta$ (for example sampling noise
0.2 fit 0.77 vs 0.63 and $\delta$ 25.3\% vs 28.7\%; the open setup fit 0.78 vs 0.75
and $\delta$ 18.3\% vs 25.4\%; Table~S2), the convergence of $V^{W}$ toward $V^{G}$ the
target is built to produce (Proposition~\ref{prop:fix}) and the lower $\delta$
across the gap setups in Figure~\ref{fig:gap-disagree} (right). On execution
return, the change is not significant on the gap setups except the larger
discovery map, where removing the shared target scores higher ($p<0.05$; Table~S1), the gate-selectivity pattern
discussed below. Elsewhere, the change is within the per-episode
variation and not significant (the open setup 295.2 and sampling noise 0.1 38.01;
Table~\ref{tab:ablate-gap}).

\begin{figure}[t]
\centering
\includegraphics[width=\columnwidth]{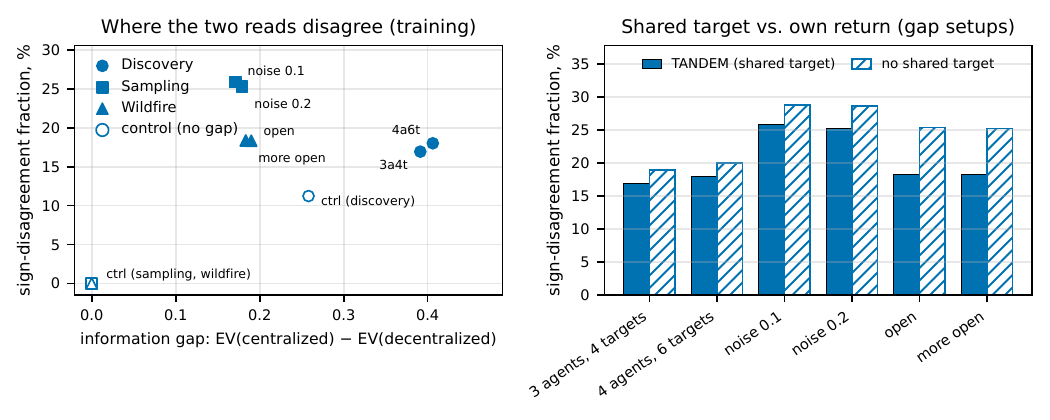}
\caption{Left: sign-disagreement fraction between the centralized and
decentralized advantage against the information gap (EV of the centralized read minus EV
of the decentralized read, last 200 training episodes), one point per setup; marker shape =
domain, hollow = control without a gap. Right: the same fraction with and without
the shared target on the six gap setups (controls are zero for both).}
\label{fig:gap-disagree}
\end{figure}

\textbf{Gate selectivity.}
\label{sec:ablation-select}
The gate is selective to the gap, not the top choice among TANDEM's own
variants on every map. For example, on the larger discovery map (4 agents, 6 targets), three variants score above the full method at corrected $p<0.05$: the gate forced open (9.10), no decentralized read (8.78), and no shared target (7.71), against the full method's 6.91 (Table~S1). There the gate tracks the sign-disagreement signal rather than maximizing return among its variants. The full method's standing rests on its gap-matched gate activity and its never scoring significantly below any CTDE baseline (Section~\ref{sec:results}). On
the other 7 setups no variant differs significantly from the full method.

%% file: tables/table2_ablate_the_gap.tex
\begin{tabular}{lcccc}
\toprule
Setup & TANDEM & \makecell{Gate\\closed} & \makecell{No shared\\target} & Spread \\
\midrule
Disc. small & 3.70 & 3.81 & 3.61 & 0.21 \\
Disc. larger & 6.91 & 7.14 & 7.71 & 0.80 \\
Disc. control & 6.29 & 6.72 & 6.32 & 0.43 \\
\midrule
Samp. 0.1 & 38.87 & 40.32 & 38.01 & 2.32 \\
Samp. 0.2 & 37.79 & 36.71 & 36.75 & 1.08 \\
Samp. control & 41.16 & 41.16 & 41.16 & 0.00 \\
\midrule
WF open & 290.1 & 272.6 & 295.2 & 22.60 \\
WF more open & 306.6 & 309.8 & 301.9 & 7.92 \\
WF closed & 48.00 & 48.00 & 48.00 & 0.00 \\
\bottomrule
\end{tabular}

%% file: sections/9-conclusion.tex
We investigated addressing the information gap in cooperative MARL
under CTDE: a policy trained with a centralized critic is optimized to rely on
global-state information it will not have at execution, leading to a degradation in performance. We proposed TANDEM, which keeps a
single centralized critic, reads it two ways, and mixes the advantages gained from the two reads using a
parameter-free gate that trains the policy on the decentralized advantage where the
two reads agree and on the centralized advantage only where the local view is
misleading.  At execution, with the centralized critic removed, TANDEM is the best method on
8 of the 9 setups and is never significantly worse than any of the 6
CTDE baselines. Its margins and information gap are widest on the discovery
domain, where it significantly outperforms every baseline on the gap setups. This rests on TAO-Dec-POMDP as a formal model of the setting, an instantiation within standard policy-gradient
training, and three theoretical propositions that characterize the method.

TANDEM has limitations that point to future work. It addresses agent and task openness but leaves unexplored
frame openness~\citep{eck2023openness}, in which an agent's goals or capabilities change. The gate acts on the sign agreement of the two reads, so it corrects the
policy where the local view points the wrong way, not where the two views agree yet both are wrong. Its contribution is clearest
under openness, where forcing the gate closed significantly lowers execution
return. On the discovery maps, where the two reads
rarely disagree in sign, the full method already beats every baseline, though its
selective gating is not the highest-scoring among TANDEM's own variants there.
Each domain also isolates a single form of the information gap rather than combining them, and standard cooperative benchmarks are left to future work.
Isolating the gate from the critic architecture through a matched-architecture baseline and extending the two-read construction to frame openness and larger teams are natural next steps.

%% file: sections/10-acknowledgement.tex
This research was supported, in part, by a collaborative NSF Grant
\#IIS-2312657 (to Doshi), \#IIS-2312658 (to Soh), and \#IIS-2312659 (to Eck).
Some of the computing occurred on the Holland Computing Center of the University
of Nebraska, which receives support from the university's Office of Research and
Economic Development and the Nebraska Research Initiative.

%% file: sections/11-supplementary.tex

This supplementary material complements the main paper with the full record behind TANDEM's
result. It collects, in order: the complete variant sweep, with the two ablations
of the main paper
(Section~\ref{sup:variants}); the diagnostic that the shared target does its work at
the critic (Section~\ref{sup:target}); the gate activity measured against the
information gap (Section~\ref{sup:gate}); training-time return against the six CTDE
baselines (Section~\ref{sup:efficiency}); the per-setup learning curves
(Section~\ref{sup:curves}); the training curves of all six CTDE baselines
(Section~\ref{sup:curvesall}); the validation curves during training
(Section~\ref{sup:validation}); the gate dynamics over training
(Section~\ref{sup:dynamics}); the training-versus-execution comparison
(Section~\ref{sup:trainexec}); the hyperparameters (Section~\ref{sup:hparams}); the
per-setup configuration details (Section~\ref{sup:setups}); the seeding protocol in
full (Section~\ref{sup:seeding}); the architecture, optimization, and implementation
details (Section~\ref{sup:arch}); the masked-read contents per domain
(Section~\ref{sup:mask}); and the compute and software statement
(Section~\ref{sup:compute}).

All numbers come from the same experiments as the main paper. Each cell is one trained
policy (RNG seed 42), trained for 1000 episodes under an identical
1.2M-env-step budget and then scored on 100 unseen execution episodes (unseen
environment seeds 201--300, identical for every method) with the centralized critic
removed and each agent acting on its own information. The 100 execution seeds are
disjoint from the training seeds (301 onward) and the validation seeds (100 onward),
so every execution episode is unseen, and each comparison is paired by seed (every
method scored on the identical episodes). Significance throughout is a paired
two-sided Wilcoxon signed-rank test over these 100 paired episodes at $p<0.05$,
Bonferroni-corrected over the comparisons of a setup.
Section~\ref{sup:seeding} states the three seed pools, the RNG seeding, and the
scope of the comparison in full. TANDEM,
the two ablations of the main paper, and the six CTDE baselines are defined in the main paper.

The main paper compares TANDEM against the six CTDE baselines and reports the two
ablations that isolate the gate and the shared target. This supplement adds, for
completeness, the two further TANDEM variants (gate forced open and no decentralized
read), so the full record is available in one place.

\section{All TANDEM variants}
\label{sup:variants}

Table~\ref{tab:variants} widens the comparison to all five TANDEM variants, with
the difference to TANDEM in brackets; an asterisk marks
a difference significant at $p<0.05$ (as defined in the protocol at the start of
this supplement; Bonferroni over seven comparisons per setup, as the correction
also covered three non-CTDE learners omitted from this table). A TANDEM variant
differs significantly from the full method on only two of the nine setups. The main paper uses
two of these variants, the gate forced closed ($\chi_i^t\equiv0$) and no shared
target; the other two, the gate forced open (the policy trained on the centralized
read alone, $\chi_i^t\equiv1$) and no decentralized read (the mask removed, so both
reads see the global state), are reported here for completeness.

The controls behave as the gate predicts. In the sampling no-noise control and the
wildfire closed control setups, all five variants return the identical value (41.16 and
48.00; difference $+0.00$ for every variant), because without the information gap the
two reads agree, the gate stays closed, and the sign-disagreement fraction $\delta$
is $0\%$ (Section~\ref{sup:gate}, Table~\ref{tab:gatepersetup}). The discovery
full-observation control differs only slightly across variants (6.23 to 6.72;
Table~\ref{tab:variants}), since the critic's global-state token and the agent's LIDAR
observation are not the same input even when the arena is fully visible, and no
variant there differs from TANDEM significantly.

The gate adds return under openness. Forcing the gate closed scores 272.6 in the
open wildfire setup, 17.48 below the full method (290.1, $p<0.05$; Table~\ref{tab:variants}),
and does not differ
significantly from TANDEM in any other setup. The gate opens on the open setup precisely
because openness makes the two reads diverge (Section~\ref{sup:gate}), so withdrawing
it there removes the centralized correction the setup needs, while costing nothing on
the setups where the two reads already agree.

On the larger discovery map the gate is selective to the sign-disagreement signal
rather than scoring above its own variants. Forcing the gate open (9.10, $+2.19$,
$p<0.05$), removing the decentralized read (8.78, $+1.87$, $p<0.05$), and removing the
shared target (7.71, $+0.80$, $p<0.05$) each score above the full method (6.91) on
that setup (Table~\ref{tab:variants}); on the smaller discovery map the gate-open and
no-read variants are higher (4.12 and 4.12 vs 3.70) but not significantly, and on the
remaining seven
setups the only significant variant difference is the gate forced closed on the open
wildfire setup ($-17.48$, $p<0.05$; above). On this fully structured
partial-observability map the full method's value is its gap-matched gate activity
(Section~\ref{sup:gate}) together with its never scoring significantly below any CTDE
baseline in the main comparison, rather than a margin over its own ablations.

\begin{table*}[t]
\centering
\footnotesize
\caption{Execution return (mean over 100 unseen episodes) of every TANDEM variant;
in brackets the difference to TANDEM (row minus
TANDEM), $*$ = statistically significant (protocol at the start of this supplement; Bonferroni over
seven comparisons per setup, a correction that also covered three non-CTDE
learners omitted from this table). On
the sampling no-noise and wildfire closed controls every variant returns the same value,
as the gate is closed throughout.}
\label{tab:variants}
\resizebox{\textwidth}{!}{\input{tables/tableS1_all_variants_execution.tex}}
\end{table*}

\section{The shared target at the critic}
\label{sup:target}

Table~\ref{tab:sharedtarget} shows where the shared target does its work: at the
critic, on the decentralized read. For the full method and the no-shared-target
ablation it reports the decentralized read's fit (the explained variance of
$V^W$ against the returns, averaged over the last 200 training episodes) and the
sign-disagreement fraction $\delta$ (over all 1000 training episodes).

The shared target raises the decentralized-read fit and lowers $\delta$ on every gap setup. On
sampling noise 0.2 the decentralized-read fit is 0.77 with the shared target against 0.63
without, and $\delta$ is $25.3\%$ against $28.7\%$; on the open wildfire setup the fit is
0.78 against 0.75 and $\delta$ is $18.3\%$ against $25.4\%$. On the sampling no-noise
and wildfire closed controls $\delta$ is $0.0\%$ with and without the shared target and
the decentralized-read fit is unchanged (0.97 and 0.90), because without the gap the two reads
already agree.

This is the convergence the shared target is designed to produce. Regressing both
reads to one centralized target brings the decentralized read $V^W$ toward the
centralized read $V^G$ on each agent's own information, so the two reads come to agree
wherever that information is sufficient and the gate closes there. The shared target's
measured effect is thus on the critic's two reads, as designed; its effect on
execution return when removed is in Table~\ref{tab:variants}.

\begin{table*}[t]
\centering
\footnotesize
\setlength{\tabcolsep}{16pt}
\renewcommand{\arraystretch}{1.15}
\caption{Fit of the decentralized read (explained variance of $V^W$ over the last 200
training episodes) and sign-disagreement fraction between the two advantages (mean
over 1000 training episodes), with and without the shared target.
The shared target raises the decentralized-read fit and lowers the sign-disagreement fraction on
the six gap setups; on the sampling no-noise and wildfire closed controls the
sign-disagreement fraction is $0\%$ either way.}
\label{tab:sharedtarget}
\input{tables/table3_shared_target_at_the_critic.tex}
\end{table*}

\section{The gate against the information gap}
\label{sup:gate}

Table~\ref{tab:gatepersetup} reports, per setup for the full method, the information
gap (the explained variance of the centralized read minus that of the decentralized
read, over the last 200 of 1000 training episodes), the mean gate value $\bar\chi$ and
the sign-disagreement fraction $\delta$ (over all 1000 episodes), and the environment
steps to reach 90\% of the final training return for the full method and the
gate-forced-closed variant.

The gate engages only where a gap exists, and more where it is wider. The gap, the
gate, and $\delta$ are nonzero on all six gap setups and exactly zero on the sampling
no-noise and wildfire closed controls, and small on the discovery full-observation
control (gap 0.26, gate 0.019, $\delta$ 11.2\%). The gap is largest on discovery
(0.39 and 0.41) while $\delta$ is highest on sampling (25.9\% and 25.3\%): $\delta$
marks the gap's presence, not its width.

\begin{table*}[t]
\centering
\footnotesize
\setlength{\tabcolsep}{4pt}
\caption{The gate against the information gap, full method: gap = EV(centralized read $V^G$) $-$
EV(decentralized read $V^W$) over the last 200 training episodes; gate mean $\bar\chi$ and
sign-disagreement fraction over the 1000 episodes; environment steps until the running
mean (20 episodes = 24k steps) of the training return first reaches 90\% of its final
rise. The gap, the gate, and the sign-disagreement fraction are
exactly zero on the sampling no-noise and wildfire closed controls and small on the
discovery full-observation control.}
\label{tab:gatepersetup}
\resizebox{\textwidth}{!}{\input{tables/table4_gate_against_the_gap.tex}}
\end{table*}

\section{Training efficiency}
\label{sup:efficiency}

Table~\ref{tab:efficiency} reports the mean training return over the 1000 training
episodes, the area under the learning curve, for TANDEM and the six CTDE baselines.
Under the identical 1.2M-env-step budget TANDEM's mean training return is the highest
of the seven methods on every setup, and exceeds that of MAPPO from the larger
discovery map (6.44 vs 4.16) to the open wildfire setup
(259.7 vs 234.1).

\begin{table*}[t]
\centering
\footnotesize
\caption{Mean training return over the 1000 training episodes (area under the learning
curve), TANDEM and the six CTDE baselines. TANDEM's mean training
return is the highest of the seven methods on every setup.}
\label{tab:efficiency}
\resizebox{\textwidth}{!}{\input{tables/table5_training_efficiency.tex}}
\end{table*}

\section{Learning curves}
\label{sup:curves}

Figure~\ref{fig:curves} shows the 20-episode running mean of the training return for
TANDEM and its two ablations, the gate forced closed and no shared target, with
MAPPO as a reference, one panel per setup. TANDEM tracks at or above the ablations
and MAPPO throughout training, and the ablation curves separate from TANDEM mainly
where the information gap drives the gate open, most visibly on the open setup.

\begin{figure*}[t]
\centering
\includegraphics[width=\textwidth]{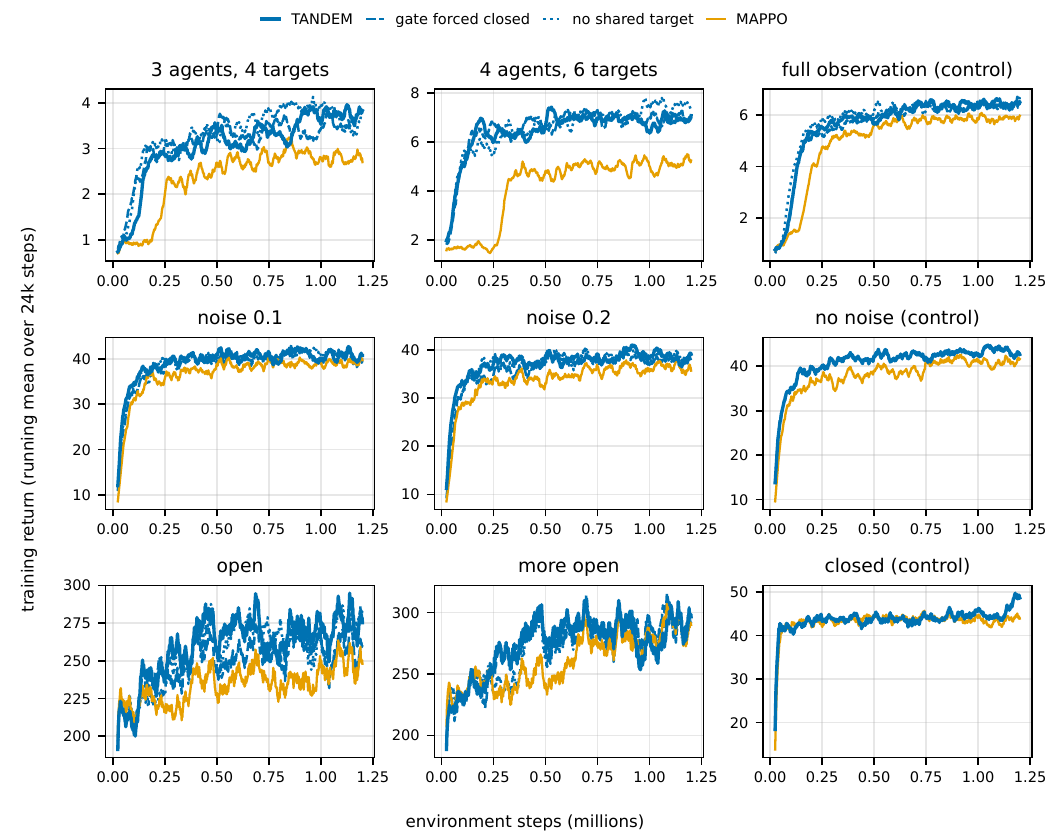}
\caption{Training return against environment steps (curves smoothed with a running mean over 24k steps = 20
episodes) of TANDEM and its two ablations, gate forced closed ($\chi\equiv0$) and no
shared target, with MAPPO as reference, one panel per setup; no
band.}
\label{fig:curves}
\end{figure*}

\section{Training curves for all CTDE baselines}
\label{sup:curvesall}

Figure~\ref{fig:curvesall} shows the 20-episode running mean of the training return
for TANDEM and all six CTDE baselines, one panel per setup. Consistent with
Table~\ref{tab:efficiency}, TANDEM reaches the highest training return of the seven
methods on every setup. The policy-gradient methods TANDEM and MAPPO lead the
value-factorization baselines (QMIX, VDN, MAVEN) and MADDPG by a wide margin on
wildfire, where the open-setup return is hardest to reach.

\begin{figure*}[t]
\centering
\includegraphics[width=0.8\textwidth]{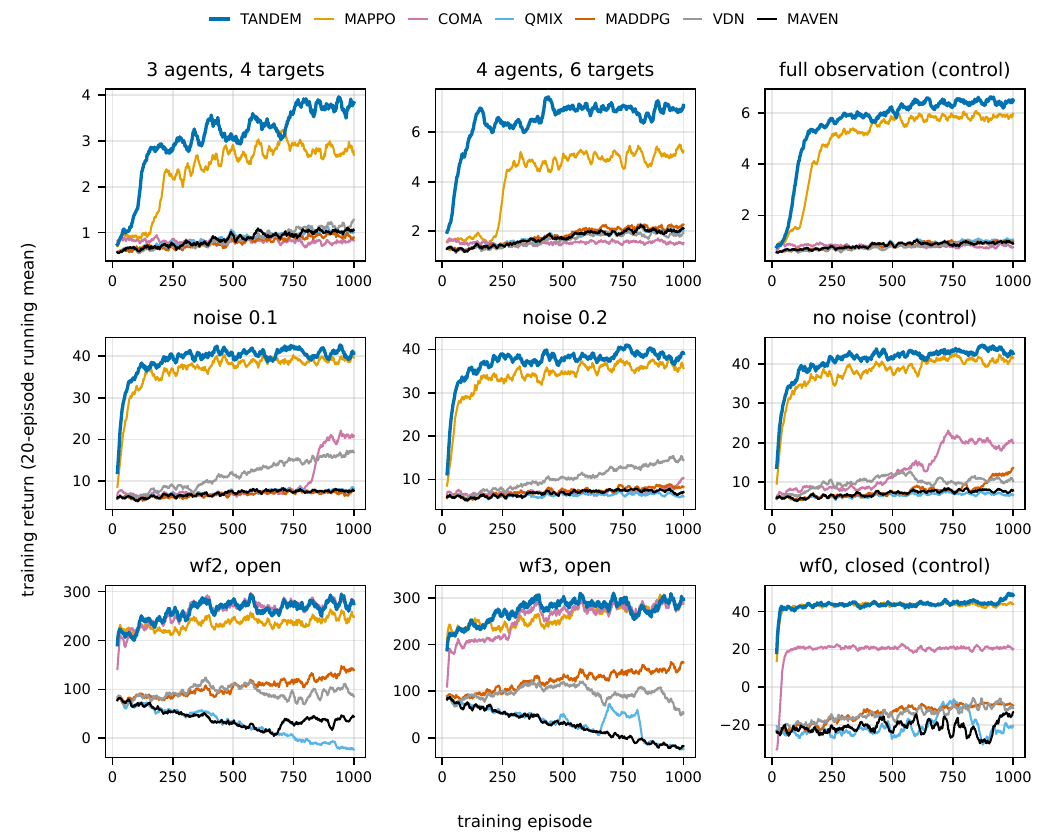}
\caption{Training return over the full 1000 training episodes (1000 episodes of 12
parallel environments $\times$ 100 steps = 1.2M steps) of TANDEM and the six CTDE
baselines, one panel per setup; curves smoothed with a 20-episode running mean
(24k environment steps); no band.}
\label{fig:curvesall}
\end{figure*}

\section{Validation curves during training}
\label{sup:validation}

Figure~\ref{fig:validation} shows the validation return during training for TANDEM
and the six CTDE baselines, one panel per setup. Validation is measured every 40
episodes on the unseen validation seeds (100 onward; 30 validation episodes for
discovery and sampling, 50 for wildfire), seeds disjoint from both the training seeds
(301 onward) and the execution seeds (201--300), so the curves track generalization to
unseen environments during training without being used for checkpoint selection.
TANDEM's validation return is at or near the top of the seven methods throughout
training on each setup, consistent with its execution result on the unseen seeds.

\begin{figure*}[t]
\centering
\includegraphics[width=\textwidth]{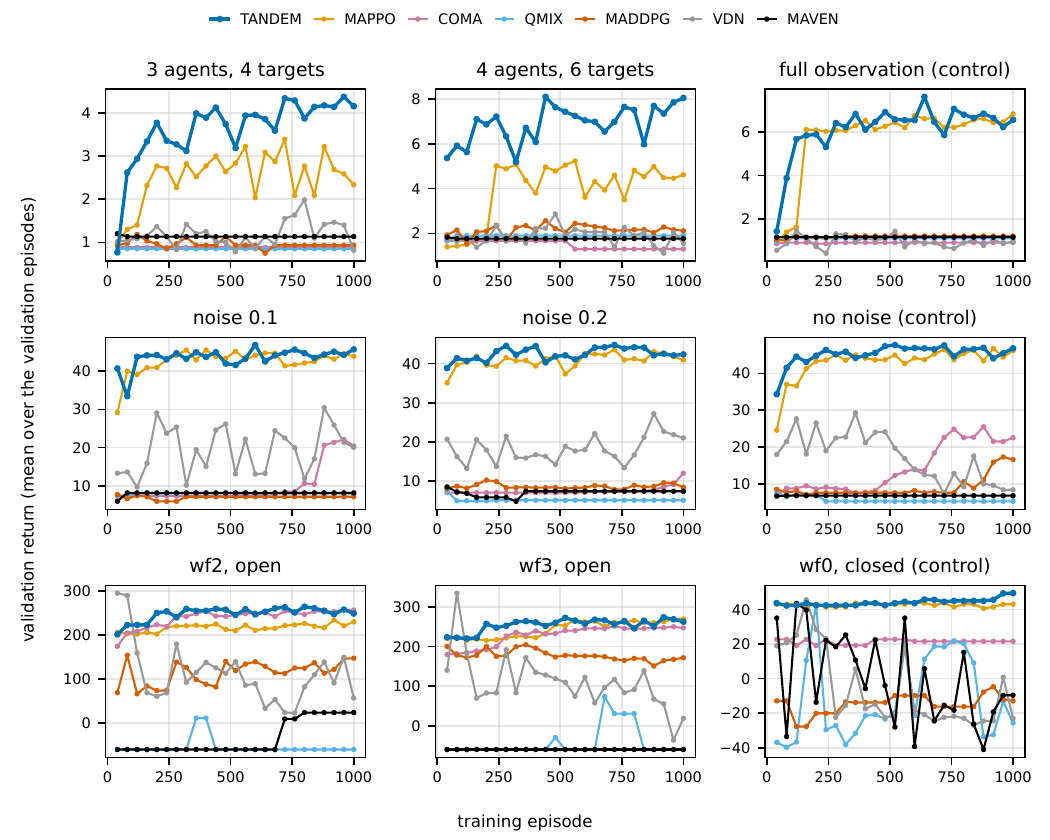}
\caption{Validation return against training episodes (every 40 episodes = 48k environment steps;
mean over the validation episodes on the unseen validation seeds 100 onward; 30 episodes
for discovery and sampling, 50 for wildfire), TANDEM and the six CTDE baselines, one
panel per setup.}
\label{fig:validation}
\end{figure*}

\section{Gate dynamics over training}
\label{sup:dynamics}

Figure~\ref{fig:gatedyn} shows the gate mean $\bar\chi$ and the sign-disagreement
fraction $\delta$ over training, one panel per setup. Both are sustained and nonzero
on the six gap setups and flat at zero on the sampling no-noise and wildfire closed
controls; on the discovery full-observation control they rise slowly, because the
critic's global-state token and the agent's LIDAR observation remain different inputs.
The gate therefore stays engaged exactly where the information gap is present,
throughout training, not only at the end.

\begin{figure*}[t]
\centering
\includegraphics[width=\textwidth]{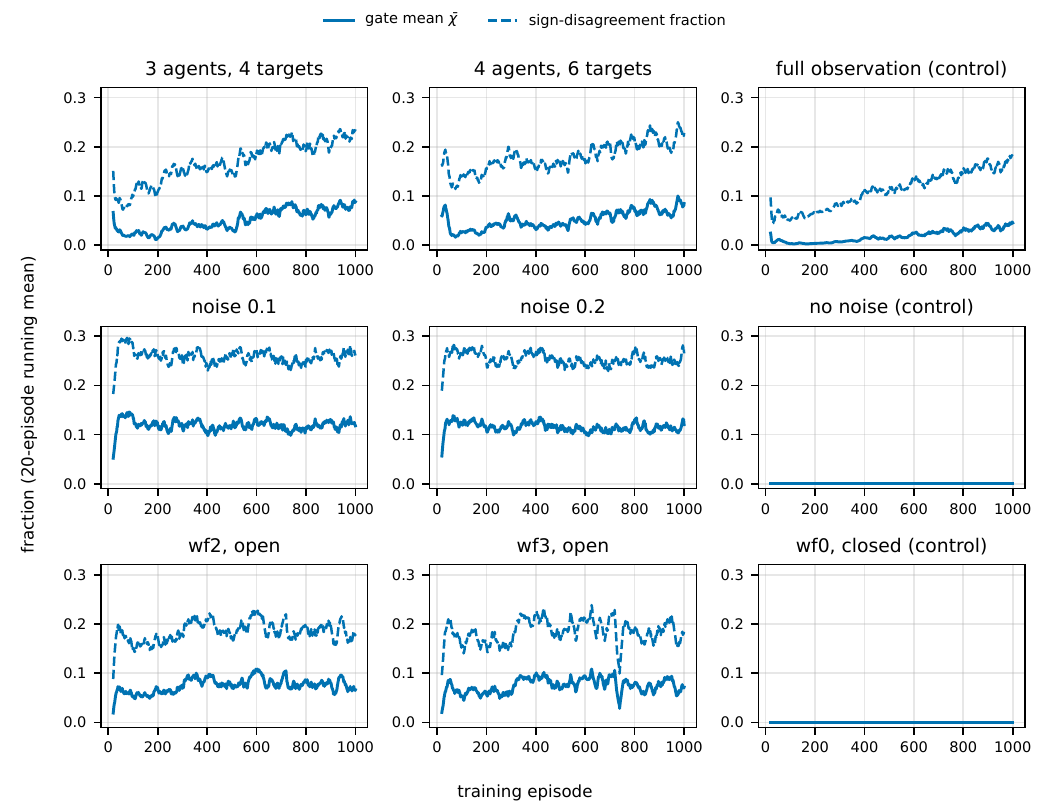}
\caption{TANDEM's gate mean $\bar\chi$ and sign-disagreement fraction against
training episodes (smoothed with a 20-episode running mean), per setup: both stay at zero on the
controls without a gap.}
\label{fig:gatedyn}
\end{figure*}

\section{Training versus execution return}
\label{sup:trainexec}

Figure~\ref{fig:trainexec} places each method's training return (over steps
1.08M--1.2M, the last 100 training episodes, with the critic shaping the policy) next to its execution return (the 100
unseen episodes, with the critic removed), one panel per setup. The information gap
is the drop from training to execution: a method optimized against a centralized
critic can score well in training and then lose return once the critic is gone.
TANDEM sits at or near the top in training and holds that return at execution, with a
small train-to-execution change relative to the baselines that lean hardest on a
centralized signal.

\begin{figure*}[t]
\centering
\includegraphics[width=\textwidth]{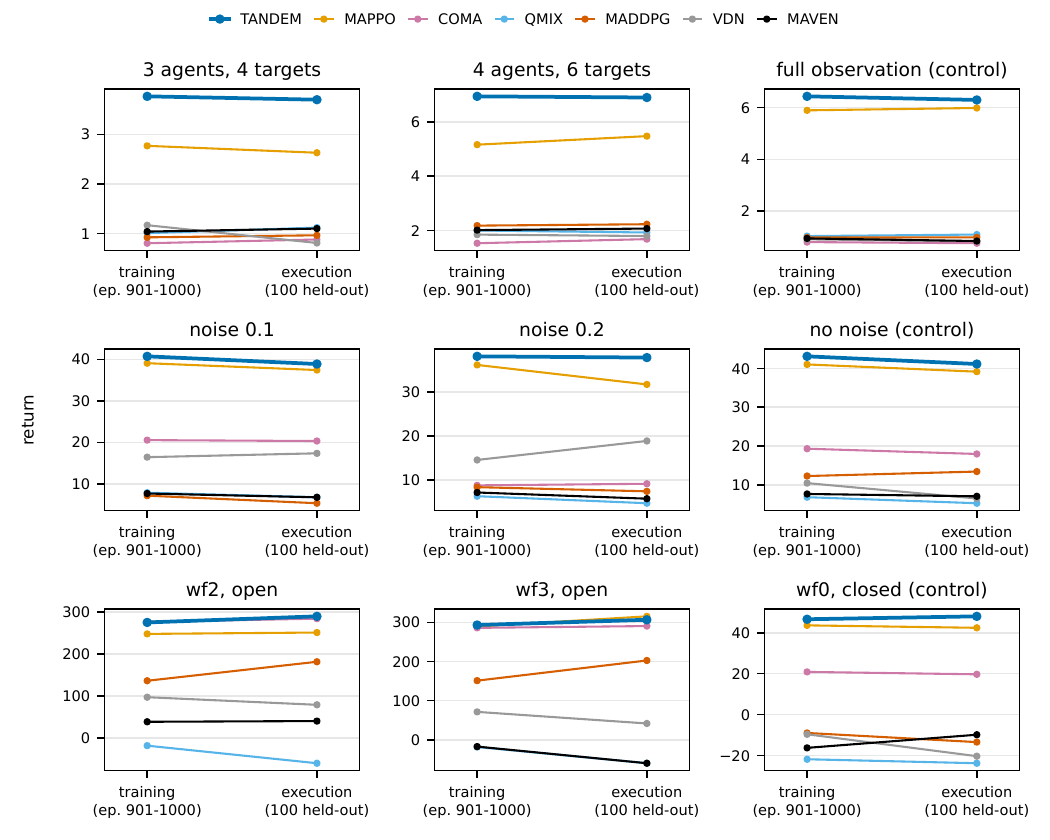}
\caption{Mean training return over the last 100 training episodes (steps 1.08M--1.2M,
left point) joined to the mean execution return over 100 unseen episodes (right
point), per method and setup.}
\label{fig:trainexec}
\end{figure*}

\clearpage
\section{Hyperparameters}
\label{sup:hparams}

\begin{sloppypar}
All methods---TANDEM and the baselines alike---had their hyperparameters tuned with
Optuna under one shared protocol and budget, so no method is
favored by tuning. Each trial trained a configuration on all three domains at a reduced
probe budget (100 episodes per domain for the on-policy methods, 300 for the
off-policy baselines, whose replay warm-up and $\epsilon$ anneal need the longer
probe),
with a fixed trial RNG seed so trials are comparable, and
was scored by the mean per-domain normalized evaluation return on the validation seed
pool (100 onward), disjoint by construction from the held-out execution seeds
201--300 used for every reported number, so tuning never saw the test worlds; a TPE
sampler proposed configurations and a median pruner stopped trials that fell behind
after a domain, run as one resumable study per method with a matched per-study budget.
The selected configuration (highest mean normalized return) is the one used in every
reported run and listed in Tables~\ref{tab:hparams} and~\ref{tab:hparams-value}. Search
ranges followed best practice per algorithm family: for the policy-gradient methods,
learning rate $[10^{-4},3{\times}10^{-3}]$ (log), entropy coefficient
$[10^{-4},5{\times}10^{-2}]$ (log), GAE $\lambda\in[0.90,0.99]$, clip $\in[0.10,0.30]$,
PPO epochs $\in[5,15]$, value-loss coefficient $[0.25,1.0]$, gradient clip $[0.5,10]$
(log), hidden width $\in\{64,128,256\}$ and depth $\in\{1,2,3\}$; for the value-based
methods, learning rate, batch size, buffer size, target-update interval and the
$\epsilon$ schedule over comparable ranges; for MADDPG, separate actor and critic
rates, $\tau$, batch and buffer. All methods share the fixed budget of 1000 episodes
($1.2$M env steps) over 12 parallel 100-step rollouts and discount $\gamma=0.99$;
the policy-gradient methods and MADDPG use Adam ($\varepsilon=10^{-5}$), the
discrete value-based methods RMSprop ($\alpha=0.99$, $\varepsilon=10^{-5}$).
\end{sloppypar}

Three notes on the value-based methods (Table~\ref{tab:hparams-value}). First,
$\epsilon$ anneals linearly per training episode over 2000 episodes, the production
default shared across the value-based methods; under the 1000-episode budget the
final checkpoint therefore collects experience at $\epsilon=0.525$, while validation
and execution are always greedy ($\epsilon=0$). Second, MAVEN uses a latent
dimension $|z|=8$ and a DIAYN-style mutual-information intrinsic reward with
coefficient $0.1$; its hierarchical $z$-policy and per-state MI discriminator are
2-hidden-layer width-64 Tanh MLPs, and the $z$-policy is trained by REINFORCE
(running baseline) with its own Adam at $5{\times}10^{-4}$. Third, the $-60.00$ entries on the
two open wildfire setups (3$\times$3 grid, hidden refill rate) are the deterministic
value of a degenerate never-extinguish policy: when all six fires burn out, the
penalty is exactly $-(3\cdot4+3\cdot16)=-60$, so $-60.00$ marks that outcome, not an
evaluation artifact.

\begin{table}[t]
\centering
\footnotesize
\setlength{\tabcolsep}{4pt}
\caption{Tuned hyperparameters: policy-gradient methods. The actor is a width-64 MLP (2 layers), or a width-64
pre-norm residual block (1 block) for TANDEM; COMA's and TANDEM's critics are
transformers over entity tokens ($d_{\text{model}}{=}128$, 2 layers). TANDEM clips
actor and critic gradients separately; its per-read MSE value loss is unweighted
(no value-loss coefficient, marked ---).}
\label{tab:hparams}
\begin{tabular}{lccc}
\toprule
 & MAPPO & COMA & TANDEM \\
\midrule
Learning rate    & $5{\times}10^{-4}$ & $5{\times}10^{-4}$ & $5{\times}10^{-4}$ \\
Entropy coef.    & $0.01$ & $0.01$ & $0.01$ \\
GAE $\lambda$    & $0.95$ & $0.95$ & $0.95$ \\
Clip ratio       & $0.2$ & $0.2$ & $0.2$ \\
PPO epochs       & $10$ & $10$ & $10$ \\
Value-loss coef. & $0.5$ & $0.5$ & --- \\
Grad.\ clip      & $10$ & $10$ & $0.5$ / $10$ \\
Hidden width     & $64$ & $64$ & $64$ \\
Hidden layers    & $2$ & $2$ & $1$ \\
\bottomrule
\end{tabular}
\end{table}

\begin{table}[t]
\centering
\footnotesize
\setlength{\tabcolsep}{4pt}
\caption{Tuned hyperparameters: value-based methods and MADDPG. Hidden width 64 and 2
layers throughout; mixing-network width 32 for QMIX and MAVEN. $\epsilon$ anneals
$1.0\!\to\!\epsilon_{\text{end}}$ over the listed episodes; MADDPG lists its
actor/critic learning rates and uses a soft target update ($\tau$) rather than a hard
interval. After its warm-up, each method takes one gradient step per environment
step; MADDPG's collection uses Gumbel-softmax at temperature 1.0 with the
$\epsilon$ overlay.}
\label{tab:hparams-value}
\begin{tabular}{lcccc}
\toprule
 & QMIX & VDN & MAVEN & MADDPG \\
\midrule
Learning rate           & $5{\times}10^{-4}$ & $5{\times}10^{-4}$ & $5{\times}10^{-4}$ & $10^{-4}/10^{-3}$ \\
Batch size              & $32$ & $32$ & $32$ & $1024$ \\
Buffer size             & $5000$ & $5000$ & $5000$ & $10^{6}$ \\
Target update           & $10$ & $10$ & $10$ & $\tau{=}0.01$ \\
$\epsilon_{\text{end}}$ & $0.05$ & $0.05$ & $0.05$ & $0.05$ \\
$\epsilon$-decay (ep.)  & $2000$ & $2000$ & $2000$ & $2000$ \\
Grad.\ clip             & $10$ & $10$ & $10$ & $10$ \\
Warm-up (env steps)     & $500$ & $500$ & $500$ & $1000$ \\
\bottomrule
\end{tabular}
\end{table}

\section{Setup details}
\label{sup:setups}

Table~\ref{tab:setupdetail} gives each setup's information gap, the agent and task
counts at episode start, and the horizon. Each domain has two setups that carry the
gap, a less challenging one and a more challenging one, and one control that removes
it; on a control the two critic reads carry the same information (exactly so on
the sampling and wildfire controls, where every TANDEM variant coincides; nearly
so on the discovery control, where the masked read still omits teammate tokens),
so the gate is closed or nearly closed and contributes essentially nothing, and
the remaining difference to MAPPO reflects the critic architecture.

\begin{table}[t]
\centering
\footnotesize
\setlength{\tabcolsep}{4pt}
\begin{tabular}{llccc}
\toprule
Setup & Information gap & Agents & Tasks & Horizon \\
\midrule
\multicolumn{5}{l}{\emph{Discovery} (VMAS)} \\
\quad 3 agents, 4 targets & Partial observability & 3 & 4 & 100 \\
\quad 4 agents, 6 targets & Partial observability & 4 & 6 & 100 \\
\quad full observation (control) & none & 3 & 4 & 100 \\
\addlinespace
\multicolumn{5}{l}{\emph{Sampling} (VMAS)} \\
\quad noise $0.1$ & Observation noise & 3 & 3 & 100 \\
\quad noise $0.2$ & Observation noise & 3 & 3 & 100 \\
\quad no noise (control) & none & 3 & 3 & 100 \\
\addlinespace
\multicolumn{5}{l}{\emph{Wildfire suppression}} \\
\quad open & Openness & 3 & 2 & 100 \\
\quad more open & Openness & 3 & 2 & 100 \\
\quad closed (control) & none & 3 & 6 & 100 \\
\bottomrule
\end{tabular}
\caption{The nine setups and the gap each isolates. Counts are at episode start;
each control removes its domain's gap. Wildfire's agent and task sets change within
an episode, as described in the main paper.}
\label{tab:setupdetail}
\end{table}

\section{Seeding protocol in full}
\label{sup:seeding}

Every run draws from three disjoint, method-independent, run-independent seed
pools. \emph{Training seeds} run from 301 onward: episode $e$, parallel environment
$j$ uses environment seed $301+(e-1)\cdot12+j$, a pure function of the episode
index, never of the method or the RNG seed, so all methods see byte-identical
training worlds, roughly 12{,}000 distinct worlds over the 1000 episodes of 12
parallel environments. \emph{Validation seeds} run from 100 onward (30 episodes; 50
on wildfire) and are used for the validation curves only, never for checkpoint
selection. \emph{Execution seeds} are 201--300, the held-out set behind every
reported number.

The RNG seed is separate: both configuration knobs are set to 42 and mixed into one
process seed that seeds Python, NumPy, and PyTorch, with
\texttt{torch.use\_deterministic\_algorithms} enabled. It controls parameter
initialization, action sampling, and minibatch order, and is fixed ex ante and
shared across all methods.

Scope of the comparison. Each table cell is one trained policy per method at this
shared draw; the Wilcoxon test over the 100 paired held-out episodes tests
differences between these policies on unseen worlds. The design trades replication
over RNG seeds for breadth over environment seeds under a fixed compute budget, so
comparisons are reported conditional on the shared draw.

\section{Architecture, optimization, and implementation details}
\label{sup:arch}

\textbf{TANDEM actor.} A width-64 pre-norm residual MLP with one residual block
(LayerNorm, Linear, GELU, Linear, with a skip connection) and a final LayerNorm;
GELU activations throughout; orthogonal initialization (gain $\sqrt{2}$ for hidden
layers, $0.01$ for the output layer). Its input is the raw observation concatenated
with a one-hot agent identifier, and its parameters are shared across agents.

\textbf{TANDEM critic.} A 2-layer pre-norm transformer encoder over entity tokens:
$d_{\text{model}}=128$, 4 attention heads, feed-forward width 256, GELU, no
dropout. Each token type (global, agent, task) has its own input projection; a
final LayerNorm and a linear value head (gain $1.0$) produce the value, read out
per agent at the agent's own token. Masked tokens are excluded as attention keys,
and both reads are computed in one batched pass.

\textbf{TANDEM losses and optimization.} Adam (learning rate $5\times10^{-4}$,
$\varepsilon=10^{-5}$) for actor and critic; 10 PPO epochs on the full batch (no
minibatching). The actor loss is the clipped PPO surrogate of the main paper on
the mixed advantage $\hat{A}_i^t$ (clip $0.2$), plus an entropy bonus with
coefficient $0.01$, averaged over the active (death-masked) agent-steps, with
advantages standardized per batch; the actor gradient is clipped at $0.5$. The
gate $\chi$ is evaluated once per collected batch, under no-grad, from the two GAE
advantages in reward units (both reads are de-normalized by the running
value-normalizer before GAE), so no gradient flows through the gate and $\chi$
stays fixed across the epochs; the MAD normalizer is taken over the active
agent-steps of the batch, with guard $\varepsilon_0=10^{-8}$. The critic loss is
the unweighted sum of the per-read MSEs of $V^{G}$ and $V^{W}$ against the single
shared normalized target, the normalized centralized return
$y_i^t=\hat{A}_i^{G,t}+V_i^{G,t}$; one running value-normalizer (an EMA of the
return mean and variance, $\beta=0.99999$) is refreshed once per update; the
critic gradient is clipped at $10.0$. GAE uses $\gamma=0.99$ and $\lambda=0.95$,
computed per read, with bootstrapped values at time-limit truncation (the
100-step horizon), so episode cuts are not treated as terminal.

\textbf{Baselines.} MAPPO's actor and critic are 2-hidden-layer
width-64 MLPs with Tanh activations. COMA uses MAPPO's actor and the same
entity-token transformer critic as TANDEM ($d_{\text{model}}=128$, 2 layers, 4
heads) with a counterfactual Q head. QMIX and VDN use per-agent Q-networks
(observation $\to$ 64 $\to$ 64 $\to$ actions, ReLU) trained with RMSprop (learning
rate $5\times10^{-4}$, $\alpha=0.99$), batch size 32, buffer size 5000, and target
networks synced every 10th episode; the QMIX and MAVEN mixing networks have width
32. MADDPG uses 64$\times$2 ReLU MLPs, actor learning rate $10^{-4}$ and critic
learning rate $10^{-3}$, soft target update $\tau=0.01$, batch size 1024, and a
$10^{6}$ buffer. All baselines use gradient clip $10.0$.

\section{Masked-read contents per domain}
\label{sup:mask}

This section lists what the masked state $m_i(s^t)$ of the decentralized read
conditions on in each domain.

\textbf{Discovery} masks by token removal: $m_i(s^t)$ retains the global token,
agent $i$'s own token, and the targets within LIDAR range; teammates are never
shown. This domain has no observation noise, so the masked state matches the
agent's own information.

\textbf{Sampling} masks by token removal: $m_i(s^t)$ retains the global token, the
agent's own token, and the teammates within LIDAR range; the Gaussian-field tokens
are never shown, and the critic-only columns (the hidden noise level and the
field-view summaries) are zeroed in the agent's read. One caveat holds here: the
retained tokens carry noise-free state features (own position and velocity, visible
teammates' positions), while the actor's observation of the same quantities carries
additive uniform noise; under observation noise the decentralized read is therefore
a denoised view of the entities agent $i$ observes, not a strict subset of the
actor's information.

\textbf{Wildfire} masks by value-level redaction: teammates' water and power are
set to $-1$ and their range, capacity, and gear are zeroed; the hidden per-episode
refill rate is zeroed in the agent's read; under noise, teammates' water and power
and each cell's lit/intensity readings are replaced by agent $i$'s own stored noisy
view (range, capacity, and gear stay zeroed); the agent's own token is never
overridden.

\section{Compute and software}
\label{sup:compute}

All final-campaign training ran CPU-only on a Slurm cluster, each training job
using 1 task with 2 CPU cores and 16\,GB of memory under a 72-hour wall limit: 126
training jobs (9 setups $\times$ 14 training configurations, including runs not
reported here), plus 126 held-out
execution jobs (2 cores, 8\,GB). Software: PyTorch $\geq$ 2.6 with deterministic
algorithms enabled; key packages gymnasium 1.0.0, pettingzoo 1.24.3, vmas 1.5.2,
and free-range-zoo 0.1.24.

%% file: tables/tableS1_all_variants_execution.tex
\begin{tabular}{lccccccccc}
\toprule
Variant & \makecell{Discovery\\small} & \makecell{Discovery\\larger} & \makecell{Discovery\\control} & \makecell{Sampling\\noise 0.1} & \makecell{Sampling\\noise 0.2} & \makecell{Sampling\\control} & \makecell{Wildfire\\open} & \makecell{Wildfire\\more open} & \makecell{Wildfire\\closed (control)} \\
\midrule
TANDEM & 3.70 & 6.91 & 6.29 & 38.87 & 37.79 & 41.16 & 290.1 & 306.6 & 48.00 \\
gate forced open & 4.12 (+0.42) & 9.10 (+2.19$^{*}$) & 6.67 (+0.37) & 40.21 (+1.34) & 36.92 (-0.87) & 41.16 (+0.00) & 288.0 (-2.08) & 298.0 (-8.56) & 48.00 (+0.00) \\
gate forced closed & 3.81 (+0.11) & 7.14 (+0.23) & 6.72 (+0.43) & 40.32 (+1.45) & 36.71 (-1.08) & 41.16 (+0.00) & 272.6 (-17.48$^{*}$) & 309.8 (+3.24) & 48.00 (+0.00) \\
no decentralized read & 4.12 (+0.42) & 8.78 (+1.87$^{*}$) & 6.23 (-0.06) & 40.57 (+1.70) & 36.41 (-1.38) & 41.16 (+0.00) & 287.4 (-2.68) & 312.7 (+6.16) & 48.00 (+0.00) \\
no shared target & 3.61 (-0.09) & 7.71 (+0.80$^{*}$) & 6.32 (+0.03) & 38.01 (-0.87) & 36.75 (-1.05) & 41.16 (+0.00) & 295.2 (+5.12) & 301.9 (-4.68) & 48.00 (+0.00) \\
\bottomrule
\end{tabular}

%% file: tables/table3_shared_target_at_the_critic.tex
\begin{tabular}{lcccc}
\toprule
& \multicolumn{2}{c}{Decentralized-read fit} & \multicolumn{2}{c}{Sign-disagreement} \\
\cmidrule(lr){2-3}\cmidrule(lr){4-5}
Setup & shared & no shared & shared & no shared \\
\midrule
Disc. small & 0.06 & -0.01 & 16.9\% & 18.9\% \\
Disc. larger & 0.04 & 0.03 & 18.0\% & 20.0\% \\
Disc. control & 0.21 & 0.21 & 11.2\% & 14.6\% \\
\midrule
Samp. 0.1 & 0.79 & 0.73 & 25.9\% & 28.8\% \\
Samp. 0.2 & 0.77 & 0.63 & 25.3\% & 28.7\% \\
Samp. control & 0.97 & 0.97 & 0.0\% & 0.0\% \\
\midrule
WF open & 0.78 & 0.75 & 18.3\% & 25.4\% \\
WF more open & 0.79 & 0.76 & 18.4\% & 25.2\% \\
WF closed (control) & 0.90 & 0.90 & 0.0\% & 0.0\% \\
\bottomrule
\end{tabular}

%% file: tables/table4_gate_against_the_gap.tex
\begin{tabular}{lccccccccc}
\toprule
Reading & \makecell{Discovery\\small} & \makecell{Discovery\\larger} & \makecell{Discovery\\control} & \makecell{Sampling\\noise 0.1} & \makecell{Sampling\\noise 0.2} & \makecell{Sampling\\control} & \makecell{Wildfire\\open} & \makecell{Wildfire\\more open} & \makecell{Wildfire\\closed (control)} \\
\midrule
gap: EV($V^G$) $-$ EV($V^W$) & 0.39 & 0.41 & 0.26 & 0.17 & 0.18 & 0.00 & 0.19 & 0.18 & 0.00 \\
gate mean $\bar\chi$ & 0.051 & 0.051 & 0.019 & 0.118 & 0.115 & 0.000 & 0.075 & 0.074 & 0.000 \\
sign-disagreement & 16.9\% & 18.0\% & 11.2\% & 25.9\% & 25.3\% & 0.0\% & 18.3\% & 18.4\% & 0.0\% \\
env steps to 90\%: TANDEM & 480k & 166k & 353k & 248k & 191k & 284k & 391k & 392k & 404k \\
env steps to 90\%: gate forced closed & 449k & 179k & 487k & 179k & 253k & 284k & 384k & 403k & 404k \\
\bottomrule
\end{tabular}

%% file: tables/table5_training_efficiency.tex
\begin{tabular}{lccccccccc}
\toprule
Method & \makecell{Discovery\\small} & \makecell{Discovery\\larger} & \makecell{Discovery\\control} & \makecell{Sampling\\noise 0.1} & \makecell{Sampling\\noise 0.2} & \makecell{Sampling\\control} & \makecell{Wildfire\\open} & \makecell{Wildfire\\more open} & \makecell{Wildfire\\closed (control)} \\
\midrule
TANDEM & 3.07 & 6.44 & 5.63 & 39.07 & 36.93 & 40.78 & 259.7 & 272.9 & 43.51 \\
MAPPO & 2.35 & 4.16 & 4.95 & 36.56 & 33.56 & 37.80 & 234.1 & 260.5 & 42.87 \\
COMA & 0.81 & 1.53 & 0.81 & 9.33 & 7.10 & 13.29 & 259.2 & 254.6 & 19.29 \\
QMIX & 0.86 & 1.73 & 0.86 & 7.09 & 6.28 & 6.58 & 28.83 & 36.94 & -20.93 \\
MADDPG & 0.79 & 1.81 & 0.82 & 6.72 & 7.18 & 7.75 & 104.2 & 123.0 & -13.86 \\
VDN & 0.87 & 1.64 & 0.76 & 11.45 & 10.24 & 9.85 & 93.24 & 95.61 & -14.39 \\
MAVEN & 0.85 & 1.73 & 0.83 & 7.01 & 6.78 & 7.09 & 41.80 & 28.13 & -21.18 \\
\bottomrule
\end{tabular}